\newtheorem{thm}{Theorem}
\newtheorem{lem}{Lemma}
\newtheorem{cor}{Corollary}
\newenvironment{keyword}{\par{\noindent\bf Keywords:}}
\begin{document}

\title{On the approximability of 
robust spanning tree problems}

\author{Adam Kasperski\\
   {\small \textit{Institute of Industrial}}
  {\small \textit{Engineering and Management,}}
  {\small \textit{Wroc{\l}aw University of Technology,}}\\
  {\small \textit{Wybrze{\.z}e Wyspia{\'n}skiego 27,}}
  {\small \textit{50-370 Wroc{\l}aw, Poland,}}
  {\small \textit{adam.kasperski@pwr.wroc.pl}}
  \and
  Pawe{\l} Zieli{\'n}ski\\
    {\small \textit{Institute of Mathematics}}
  {\small \textit{and Computer Science}}
  {\small \textit{Wroc{\l}aw University of Technology,}}\\
  {\small \textit{Wybrze{\.z}e Wyspia{\'n}skiego 27,}}
  {\small \textit{50-370 Wroc{\l}aw, Poland,}}
  {\small \textit{pawel.zielinski@pwr.wroc.pl}}}

\date{}
\maketitle

\begin{abstract}
In this paper the minimum spanning tree problem with uncertain edge costs is discussed. In order to model the uncertainty a discrete scenario set is specified and  a robust framework is adopted to choose a solution.
The min-max, min-max regret and 2-stage min-max versions of the problem are discussed. The complexity and approximability of all these problems are explored.
It is proved that the min-max and min-max regret versions with nonnegative edge costs  are hard to 
approximate  within $O(\log^{1-\epsilon} n)$ 
for any $\epsilon>0$ unless the problems in NP have quasi-polynomial time
algorithms. Similarly, the 2-stage min-max problem  cannot be approximated within 
  $O(\log n)$  unless the problems in NP have quasi-polynomial time
  algorithms. In this paper randomized LP-based approximation algorithms with performance ratio of $O(\log^2 n)$ for min-max and 2-stage min-max problems are also proposed.

%

\end{abstract}

\begin{keyword}
  Combinatorial optimization;
  Approximation;
  Robust optimization;
  Two-stage optimization;
  Computational complexity
\end{keyword}

\section{Introduction}

The usual
assumption in 
combinatorial optimization
is that all input parameters are
precisely known. However, in real life this is rarely the case.
There are
two popular optimization settings of problems for hedging against 
uncertainty  of parameters: \emph{stochastic optimization setting}
and  \emph{robust optimization setting}.

In the stochastic optimization, the uncertainty is modeled 
by specifying probability distributions of the parameters and the goal is to optimize the expected value
of a solution built (see,~e.g.,~\cite{BF97,V07}).
One of the most popular models of the stochastic optimization is a
\emph{2-stage model}~\cite{BF97}. In the 2-stage approach the
precise values of the parameters are specified in the first stage,
while the values of these parameters in the second stage are uncertain and
are specified by probability distributions.
The goal is to choose a part of a solution in
the first stage and complete it in the second stage
so that the expected value of the obtained solution is optimized.
Recently, there has been a growing interest
in 
combinatorial optimization
problems formulated in
the 2-stage stochastic 
framework~\cite{DRM05,EGMS10,FFK06,KMU08,RS07}.

In the robust optimization setting~\cite{KY97} the uncertainty is modeled by specifying a
set of all possible realizations of the parameters called \emph{scenarios}. No probability distribution in the scenario set is given. 
In the \emph{discrete
scenario case}, which is considered in this paper, we define a scenario set by explicitly listing all scenarios.
Then, in order to choose a solution,
two optimization criteria,
called the \emph{min-max} and  the \emph{min-max regret}, can be
adopted. 
Under the min-max criterion, we seek a solution that minimizes 
the largest cost over all scenarios. Under the min-max regret criterion we wish to find a solution which minimizes the largest deviation from optimum over all scenarios. A deeper discussion on both criteria can be found in~\cite{KY97}.
The minmax (regret) versions of some basic
combinatorial optimization problems with discrete structure of uncertainty 
 have been extensively studied  in the recent 
 literature~\cite{ABV05,ABV06,KZ09, MMS08}.
 Furthermore, both robust criteria can be easily  extended to the 2-stage framework.
 Such an extension has been recently done in~\cite{DGR06,KMU08}.

In this paper, we wish to investigate the min-max (regret) and min-max 2-stage versions of the classical \emph{minimum spanning tree} problem.  The classical deterministic problem is formally stated as follows.
We are given a connected graph~$G=(V,E)$
with edge costs $c_e$, $e\in E$. We seek a \emph{spanning tree} of $G$  of the minimal total cost. We use $\Phi$ to denote the set of all spanning trees of $G$. The classical deterministic minimum spanning tree is a well studied problem, for which several very efficient algorithms 
exist (see,~e.g.,~\cite{AMO93}).

In the robust framework, the edge costs are uncertain and the set of scenarios~$\Gamma$ is defined by explicitly listing all possible edge cost vectors.
So, $\Gamma=\{S_1,\dots,S_K\}$ is finite and contains exactly $K$ scenarios, where a \emph{scenario} is a cost realization $S=(c^S_e)_{e\in E}$. In this paper we consider the \emph{unbounded case}, where the number of scenarios is a part of the input. 
We will denote by 
$C^*(S)=\min_{T\in \Phi} \sum_{e\in T} c^{S}_e$ the cost of a 
minimum spanning tree under a fixed scenario $S\in\Gamma$. 
In the \textsc{Min-max Spanning Tree} 
problem, we seek a spanning tree that minimizes the largest
cost over all scenarios, that is
\begin{equation}
OPT_1=\min_{T\in \Phi}\max_{S\in\Gamma} \sum_{e\in T} c^{S}_e.
\label{pminmax}
\end{equation}
In the \textsc{Min-max Regret Spanning Tree},
 we wish to find a spanning tree that minimizes the maximal regret:
\begin{equation}
OPT_2=\min_{T\in \Phi} \max_{S\in \Gamma}
\left\{\sum_{e\in T} c^{S}_e-C^*(S)\right\}.
\label{pregret}
\end{equation}

The formulation~(\ref{pminmax}) is a single-stage decision one.
We can extend this formulation to a 2-stage case as follows.
We are given the first stage edge costs $c_e$, $e\in E$,
and 
in the second stage there are $K$ possible cost realizations (scenarios) listed in scenario set $\Gamma$.
The \textsc{2-stage Spanning Tree} 
problem consists in determining a subset of edges $E_1$
in the first stage and 
 a subset of edges $E^{S}_2$ that augments it
  to form a spanning tree $T^S=E_1\cup E^{S}_2\in \Phi$
 under scenario~$S$ in the second stage
  for each scenario~$S\in \Gamma$.
  The goal is minimize the maximum cost of the determined 
  subsets of edges $E_1$, $E^{S_1}_2,\ldots, E^{S_K}_2$: 
\begin{equation}
OPT_3=\min_{E_1,E^{S_1}_2,\ldots, E^{S_K}_2} 
\max_{S\in\Gamma}
\left\{ \sum_{e\in E_1} c_e  +
 \sum_{e\in E^{S}_2} c^{S}_e\;:\; T^{S}=E_1\cup E^{S}_2\in \Phi
  \right\}.
\label{p2stage}
\end{equation}

Let us now recall some known results on the problems under consideration.
In the bounded case (when the number of scenarios is bounded by a constant), the
\textsc{Min-max (Regret) Spanning Tree} problem
is NP-hard even if $\Gamma$ contains only~2 scenarios~\cite{KY97}
and
  admits an FPTAS~\cite{ABV06}, whose running time, however, grows exponentially with $K$.
In the  unbounded case,  
the \textsc{Min-max (Regret) Spanning Tree} problem
is strongly NP-hard~\cite{ABV05,KY97}
and not approximable 
within
$(2-\epsilon)$, for any $\epsilon>0$,
unless P=NP even for edge series-parallel graphs~\cite{KZ09}.
The \textsc{Min-max (Regret) Spanning Tree} problem
is approximable within $K$~\cite{ABV06}.
However,  up to now the existence of an 
approximation algorithm  with a constant performance ratio for the 
unbounded case has been an open question. 
To the
best of the authors' knowledge
the 2-stage version of the minimum spanning tree problem
seems to exist only 
in the stochastic setting~\cite{DRM05,EGMS10,FFK06}.
 Recently, the robust 2-stage framework has been employed 
 in~\cite{DGR06,KMU08}
 for some network design and matching problems.

\paragraph{Our results}
In this paper we prove that 
the \textsc{Min-max Spanning Tree}
and \textsc{Min-max Regret Spanning Tree} problems 
are hard to  approximate 
with a constant performance ratio (Theorem~\ref{cmmmst}
and Corollary~\ref{cmmemst}). Namely, they are
are not approximable within $O(\log^{1-\epsilon} n)$ 
for any $\epsilon>0$,
  where $n$ is the input size, unless NP 
  $\subseteq$ DTIME$(n^{\mathrm{poly} \log n})$.
  We thus give a
negative answer to the open question about
the existence of approximation algorithms 
with a constant performance ratio 
for these problems.
Moreover, if both positive and  negative edge costs
 are allowed, then the 
\textsc{Min-max Spanning Tree}
problem  is not at all approximable unless P=NP (Theorem~\ref{tmmna}). 
For the \textsc{2-stage Spanning Tree} problem, 
we show that  it is not approximable 
within   any constant, unless P=NP, and 
  within $(1-\epsilon)\ln n$
  for any $\epsilon>0$, unless
   NP$\subseteq$DTIME$(n^{\log \log n})$ 
   (Theorem~\ref{t2smsp}).  The above negative results encourage us to find
randomized approximation algorithms,
 which yield a $O(\log^2 n)$ approximation ratio
 for
 \textsc{Min-max Spanning Tree}
(Theorem~\ref{thm3}) and \textsc{2-Stage min-max Spanning Tree} (Theorem~\ref{tra2s}).

\section{Min-max (regret) spanning tree}

In this section, we study the
\textsc{Min-max  Spanning Tree} and \textsc{Min-max Regret Spanning Tree}
problems. 
We improve
 the results obtained in~\cite{ABV05, KZ09}, by showing that
 both problems are hard to approximate   within a ratio of
$O(\log^{1-\epsilon} n)$ for any $\epsilon>0$,
unless the problems in NP have
quasi-polynomial time algorithms.
We then provide an LP-based randomized  algorithm
with approximation ratio of $O(\log^2 n)$
for \textsc{Min-max  Spanning Tree}.

\subsection{Hardness of approximation}

We reduce a variant of the \textsc{Label Cover} problem
(see e.g., \cite{AC95,MMS08})
to \textsc{Min-max  Spanning Tree}.

\begin{description}
\item[\mdseries \scshape Label Cover:]

\emph{Input:}
A regular  bipartite graph $G=(V,W,E)$, $E\subseteq V\times W$;
an integer~$N$ that defines the set of labels, which are
in integers in $\{1,\ldots,N\}$;
for every edge $(v,w)\in E$ a partial map $\sigma_{v,w}:\{1,\dots,N\}\rightarrow \{1,\dots,N\}$.
A \emph{labeling} of the instance
$\mathcal{L}=(G,N,\{\sigma_{v,w}\}_{(v,w)\in E})$ is a function~$l$ assigning a nonempty 
set of labels to each vertex in $V\cup W$,
namely $l: V\cup W \rightarrow 2^N$. A labeling \emph{satisfies}
an edge $(v,w)\in E$ if
\[
\exists a\in l(v), \exists b\in l(w):\; \sigma_{v,w}(a)=b.
\]
A \emph{total labeling} is a labeling that satisfies all edges.
The value of   a total labeling~$l$ is $\max_{x\in V\cup W}|l(x)|$.

\noindent\emph{Output:}
A total labeling of the minimum value. This value is denoted by 
$val(\mathcal{L})$.

\end{description}

We now recall the following theorem~\cite{AC95, MMS08}:
\begin{thm}
There exists a constant~$\gamma>0$ so that
for any language $L\in NP$, any input $\mathbf{w}$ and $N>0$,
one can construct an instance~$\mathcal{L}$ of
\textsc{Label Cover},
with $|\mathbf{w}|^{O(\log N)}$ 
vertices and the label set of size~$N$,
so that:
\begin{eqnarray*}
\mathbf{w}\in L&\Rightarrow& val(\mathcal{L})=1,\\
\mathbf{w}\not\in L&\Rightarrow& val(\mathcal{L})\geq N^{\gamma}.
\end{eqnarray*}
Furthermore, $\mathcal{L}$ can be constructed in time
polynomial in its size.
\label{tlancover}
\end{thm}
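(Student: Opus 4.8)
The plan is to derive Theorem~\ref{tlancover} from the PCP theorem together with Raz's parallel repetition theorem, applied to a base \textsc{Label Cover} instance (equivalently, a two-prover one-round game) that already has the \emph{projection property} — every $\sigma_{v,w}$ a function. First I would invoke the PCP theorem in its two-prover form: for every $L\in NP$ there is a polynomial-time reduction producing, from an input $\mathbf{w}$, a regular bipartite instance $\mathcal{L}_0=(G_0,N_0,\{\sigma^0_{v,w}\})$ with $\mathrm{poly}(|\mathbf{w}|)$ vertices, a \emph{constant} label-set size $N_0$, each $\sigma^0_{v,w}$ a partial map $\{1,\dots,N_0\}\to\{1,\dots,N_0\}$, and with the gap property that if $\mathbf{w}\in L$ some \emph{single-label} labeling (each $|l(x)|=1$) satisfies all edges, while if $\mathbf{w}\notin L$ every single-label labeling satisfies at most an $s$-fraction of the edges, for an absolute constant $s<1$. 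This is the standard clause/variable game derived from gap-$3$SAT, made left- and right-regular by the usual multiplicity trick; projection and perfect completeness are built in.

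The next step is amplification. I would take the $k$-fold parallel repetition $\mathcal{L}_0^{\otimes k}$: labels become $k$-tuples, so the label-set size is $N_0^k$, constraints are coordinatewise projections (so the projection property and regularity are preserved), and perfect completeness is untouched. By Raz's parallel repetition theorem, in the NO case every single-label labeling of $\mathcal{L}_0^{\otimes k}$ satisfies at most a $\beta^{k}$-fraction of the edges, for an absolute constant $\beta=\beta(s)\in(0,1)$. Choosing $k=\lceil \log N/\log N_0\rceil=\Theta(\log N)$ makes $N_0^k\in[N,\,N_0 N]$, and padding each label set with fresh labels that occur in no $\sigma$ brings the label-set size to exactly $N$ without changing soundness. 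The vertex count becomes $(\mathrm{poly}|\mathbf{w}|)^{\Theta(\log N)}=|\mathbf{w}|^{O(\log N)}$, and the whole construction runs in time polynomial in its (quasi-polynomial) size. Call the resulting instance $\mathcal{L}$.

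It remains to translate single-label soundness of $\mathcal{L}$ into a lower bound on $val(\mathcal{L})$ for the set-labeling objective. Completeness is immediate: if $\mathbf{w}\in L$, the all-edges-satisfying single-label labeling is a total labeling of value $1$, so $val(\mathcal{L})=1$. For the NO case, suppose there were a total labeling $l$ with $|l(x)|\le t$ for all $x$. Choose $L(x)$ independently and uniformly from $l(x)$; for an edge $(v,w)$ fix a satisfying pair $a\in l(v),\,b\in l(w)$ with $\sigma_{v,w}(a)=b$, so that $\Pr[L(v)=a,\,L(w)=b]\ge 1/t^2$, and on that event $\sigma_{v,w}(L(v))=L(w)$. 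Hence $L$ satisfies at least a $1/t^2$-fraction of the edges in expectation, so some single-label labeling does, forcing $1/t^2\le \beta^{k}$, i.e. $t\ge \beta^{-k/2}$. Therefore $val(\mathcal{L})\ge \beta^{-k/2}\ge N^{\gamma}$ with $\gamma=\tfrac{1}{2}\log(1/\beta)/\log N_0>0$ an absolute constant, which is the claimed gap.

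The only genuinely deep ingredient here is Raz's parallel repetition theorem; everything else is bookkeeping. The main points requiring care are: (i) enforcing bipartite edge-regularity of the base game with a standard splitting/multiplicity gadget and checking that a graph product of regular instances is regular; (ii) making all constants fit together so that $\gamma$ is independent of $N$ and of $\mathbf{w}$ — this works because $N_0$, $s$, and hence $\beta$ are absolute constants fixed by the PCP theorem, while only $k$ scales with $N$; and (iii) verifying that padding label sets creates no new satisfying assignments, so the parallel-repetition soundness bound carries over verbatim. I expect (i) and the precise constant-chasing in (ii) to be the fiddliest parts, but neither is conceptually hard.
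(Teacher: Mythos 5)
The paper does not actually prove Theorem~\ref{tlancover} --- it recalls it from the cited references --- and your argument is precisely the standard proof given there: the PCP theorem yields a constant-size, constant-gap regular projection game with perfect completeness, Raz's parallel repetition with $k=\Theta(\log N)$ rounds amplifies the soundness to $\beta^{k}$, and the random-single-label selection converts any total labeling of value $t$ into a strategy satisfying a $1/t^{2}$ fraction of the edges, forcing $t\geq \beta^{-k/2}\geq N^{\gamma}$. The only nit is the rounding of $k$: to make the label set size exactly $N$ by padding with dummy labels you should take $k=\lfloor \log N/\log N_0\rfloor$ so that $N_0^{k}\leq N$, rather than the ceiling; this affects only the constant $\gamma$ and nothing else.
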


We now state and prove  
the theorem,
which is 
essential in showing the hardness results
for the problems of interest. 
\begin{thm}
There exists a constant~$\gamma>0$ so that
for any language $L\in NP$, any input $\mathbf{w}$, any $N>0$
and any $g\leq N^{\gamma}$,
one can construct an instance~$\mathcal{T}$ of
\textsc{Min-max  Spanning Tree}
in time $O(|\mathbf{w}|^{O(g\log N)}N^{O(g)})$, so that:
\begin{eqnarray*}
\mathbf{w}\in L&\Rightarrow& OPT_1(\mathcal{T})\leq 1,\\
\mathbf{w}\not\in L&\Rightarrow& OPT_1(\mathcal{T}) \geq g.
\end{eqnarray*}
\label{taprminmax}
\end{thm}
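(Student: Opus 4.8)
The plan is to encode the \textsc{Label Cover} instance $\mathcal{L}$ from Theorem~\ref{tlancover} as a \textsc{Min-max Spanning Tree} instance $\mathcal{T}$ so that a spanning tree of small max-cost corresponds to a consistent labeling of small value. The natural device is the following: create one scenario per edge $(v,w)\in E$ of the Label Cover graph (so $K=|E|$), and build a small ``gadget'' graph whose spanning-tree choices encode, for each vertex $x\in V\cup W$, a guess of a set of at most $g$ labels $l(x)\subseteq\{1,\dots,N\}$. Because $g\leq N^\gamma$, there are $N^{O(g)}$ candidate label-sets per vertex, and $|\mathbf{w}|^{O(g\log N)}$ vertices, which is exactly where the claimed running time $O(|\mathbf{w}|^{O(g\log N)}N^{O(g)})$ comes from: the gadget for a single vertex has $N^{O(g)}$ parallel ``choice'' edges, and there are $|\mathbf{w}|^{O(\log N)}$ vertices. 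The scenario for edge $(v,w)$ should assign cost so that the max over that scenario is small (say $\le 1$ after scaling) precisely when the chosen label sets $l(v),l(w)$ satisfy $(v,w)$ via $\sigma_{v,w}$, and is forced to be $\ge g$ otherwise.

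First I would fix the gadget. For each vertex $x$, introduce a pair of nodes joined by a bundle of parallel edges, one edge $e_{x,A}$ for each subset $A\subseteq\{1,\dots,N\}$ with $|A|\le g$; a spanning tree must pick exactly one such edge, and picking $e_{x,A}$ means ``$l(x)=A$''. Give $e_{x,A}$ a cost (in every scenario) proportional to $|A|$, normalized so that $|A|=1$ contributes $1$ and $|A|=g$ contributes $g$; a convenient choice makes the total ``size cost'' of a tree equal to $\max_x |l(x)|$ after appropriate bundling — or, more simply, make only the scenario-specific edges carry the combinatorial load and let the choice edges all have cost $0$ in every scenario, deferring the ``value'' penalty to a separate construction. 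The cleanest route: in scenario $S_{(v,w)}$, put cost $g$ on a special ``penalty'' edge that the tree is forced to include iff the pair $(l(v),l(w))$ chosen is \emph{not} consistent with $\sigma_{v,w}$, and cost $1$ (or $0$) everywhere else; achieve the ``forced iff inconsistent'' behaviour by a small series-parallel sub-gadget where the consistent choices open up a zero-cost bypass and inconsistent ones leave only the penalty edge as the unique way to connect a pendant vertex. I would then scale all costs so that a globally consistent labeling of value $1$ yields $OPT_1=1$.

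The completeness direction $\mathbf{w}\in L \Rightarrow OPT_1(\mathcal{T})\le 1$ is then routine: take the total labeling $l$ with $val(\mathcal{L})=1$ guaranteed by Theorem~\ref{tlancover}, select the corresponding choice edges $e_{x,\{l(x)\}}$ and the bypass edges in every gadget; since $l$ satisfies every edge $(v,w)$, the penalty edge is avoided in every scenario, so the max cost is the base cost, which we scaled to $\le 1$. Soundness $\mathbf{w}\notin L \Rightarrow OPT_1(\mathcal{T})\ge g$ is the substantive step: given any spanning tree $T$, read off the label sets $l_T(x)$ it selects; if some $|l_T(x)|\ge g$ then, by the size-cost term (present in every scenario), $T$ already costs $\ge g$; otherwise $l_T$ is a labeling of value $<g\le N^\gamma = val(\mathcal{L})$ (using $\mathbf{w}\notin L$), hence $l_T$ fails to satisfy some edge $(v,w)$, so in scenario $S_{(v,w)}$ the tree $T$ is forced through the penalty edge and pays $\ge g$. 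Either way $\max_S \mathrm{cost}_S(T)\ge g$.

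The main obstacle I anticipate is designing the per-scenario gadget so that (i) ``$T$ selects label set $A$ at vertex $x$'' is well-defined and forced to be a \emph{single} subset of size $\le g$, not an arbitrary edge subset — this requires the choice bundle to be a genuine cut that every spanning tree crosses exactly once — and (ii) the inconsistency detector correctly interacts with \emph{all} pairs of selected sets across the one shared scenario edge without blowing up the size beyond $N^{O(g)}$ per vertex; checking $\exists a\in l(v),\exists b\in l(w):\sigma_{v,w}(a)=b$ naively over all $\binom{N}{\le g}^2$ pairs is too expensive, so I would instead route the detector through the individual labels: one sub-gadget per label $a$ ``active at $v$'', connected (with a zero-cost edge) to the sub-gadget for $\sigma_{v,w}(a)$ ``active at $w$'', so the per-scenario gadget has size only $N^{O(g)}$. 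Making the graph connected, the costs nonnegative, and the scaling consistent across the $|E|$ scenarios — while keeping the size product $|\mathbf{w}|^{O(g\log N)}N^{O(g)}$ — is then a bookkeeping matter that I would carry out carefully but expect to go through.
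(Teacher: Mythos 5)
There is a genuine gap, and it sits exactly at the two places you defer: the ``forced iff inconsistent'' penalty gadget and the ``size cost equals $\max_x|l(x)|$'' mechanism. Neither can be implemented the way you describe. In a \textsc{Min-max Spanning Tree} instance the graph and the per-scenario edge costs are fixed in advance; a spanning tree is free to combine \emph{any} choice edges with \emph{any} bypass edges that exist in the graph. So if a pendant vertex has a zero-cost bypass that is usable when the tree makes consistent choices, that same bypass is present and usable when the tree makes inconsistent choices -- edge ``availability'' cannot be made conditional on which other edges the tree selected. Hence a tree encoding an unsatisfying labeling simply takes the bypass, never touches the penalty edge, and your soundness argument collapses. (Charging the inconsistency additively does not work either: inconsistency is a property of a \emph{pair} of selected edges, and a scenario can only assign costs to single edges; any scenario that charges $e_{v,A}$ enough to create a gap of $g$ will also charge a tree that selected $e_{v,A}$ consistently, breaking completeness.) The same sum-versus-max problem kills the size-cost term: giving $e_{x,A}$ cost proportional to $|A|$ in every scenario makes the tree pay $\sum_x |l_T(x)|$, which is about $|V|+|W|$ even for a value-$1$ labeling, so $OPT_1\le 1$ fails; no ``normalization'' of per-edge costs turns a sum over the tree into a max over vertices.

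The paper's construction resolves both issues by a different division of labor, and it is worth seeing why. Consistency is made \emph{automatic} rather than detected: each Label Cover edge $(v,w)$ is replaced by paths $(v,u^{v,w}_{a,b},w^v)$ only for the consistent pairs $(a,b)\in N_{v,w}$, so any spanning tree must use at least one label edge per component and thereby selects a satisfying pair for every edge of $G$ -- there is nothing to check. The value of the induced labeling is then extracted as a max by the \emph{scenario set}: for each vertex and each $g$-tuple of pairwise label-distinct label edges incident to it (one per component, $|E|^gN^{2g}$ scenarios in total, which is where the $N^{O(g)}$ factor in the running time really comes from), there is a scenario charging $1$ to each edge of the tuple. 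A tree coming from a value-$1$ labeling never contains two label-distinct edges at a vertex and so pays at most $1$ in every scenario, while any tree must induce a total labeling, which for $\mathbf{w}\notin L$ assigns $\ge g$ distinct labels to some vertex, and the scenario naming those $g$ edges charges the tree exactly $g$. Your outline identifies the right source theorem and the right completeness/soundness targets, but the two gadgets carrying the entire reduction are not constructed and, as specified, cannot be.
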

\begin{proof}

Let $L$ be a language in $NP$ and let $\mathcal{L}=(G=(V,W,E),N,\{\sigma_{v,w}\}_{(v,w)\in E})$ be the instance of 
\textsc{Label Cover}  from Theorem~\ref{tlancover} constructed for $L$. Let us introduce some additional notations:
\begin{itemize}
		\item $\delta(x)$ is the set of edges of $G$ incident to vertex~$x\in V \cup W$,
		\item $N_{v,w}=\{(a,b)\in N \times N: \sigma_{v,w}(a)=b\}$. 
\end{itemize} 
We now transform $\mathcal{L}$ to an instance~$\mathcal{T}$ of \textsc{Min-max Spanning Tree}.  Let us fix $g\leq N^\gamma$, where $\gamma$ is the constant from Theorem~\ref{tlancover}. We first construct graph $G'$ in the following way. We replace every edge $(v,w)\in E$ with   paths $(v,u_{a,b}^{v,w},w^v)$ for all $(a,b)\in N_{v,w}$ (see Figure~\ref{fgraph}). The edges of the form 
$(u_{a,b}^{v,w},w^v)$ 
(the dashed edges)
are called \emph{dummy edges} and the edges of the form $(v,u_{a,b}^{v,w})$ 
(the solid edges)
are called \emph{label edges}. We  say
 that label edge $(v,u_{a,b}^{v,w})$  assigns label $a$ to $v$ and label $b$ to $w$.
 We will denote the obtained component by $G_{v,w}$ and we will use $E^l_{v,w}$ to denote the set of all label edges of $G_{v,w}$, obviously $|E^l_{v,w}|=|N_{v,w}|$.
 We finish the construction of $G'$ by 
 adding additional vertex~$s$ and  connecting all the components by additional dummy edges $(s,v)$ for all $v\in V$. A sample graph $G'$,
 where $G$ is $K_{3,3}$,
  is shown in Figure~\ref{fig2}. 
 \begin{figure}
    \begin{center}
      \psfrag{v}{\footnotesize $v$} 
      \psfrag{w}{\footnotesize $w$}
      \psfrag{wv}{\footnotesize $w^v$}
      \psfrag{Gvw}{$G_{v,w}$}
      \psfrag{u1}{\footnotesize $u^{v,w}_{a_1,b_1}$}
      \psfrag{u2}{\footnotesize $u^{v,w}_{a_2,b_2}$}
      \psfrag{un}{\footnotesize $u^{v,w}_{a_{|N_{v,w}|},b_{|N_{v,w}|}}$}
      \includegraphics{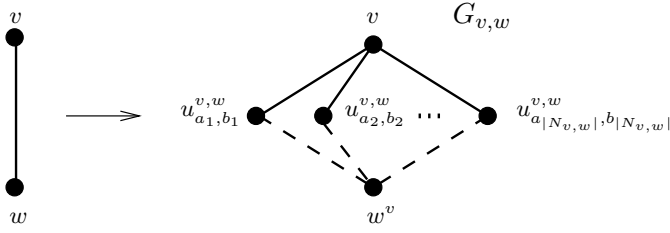}
    \end{center}
    \caption{Replacing edge $(v,w)\in E$ 
    with component~$G_{v,w}$.} \label{fgraph}
  \end{figure}
\begin{figure}
    \begin{center}
      \psfrag{s}{\footnotesize $s$} 
      
      \psfrag{v1}{\footnotesize $v_1$}
      
      \psfrag{w1v1}{\footnotesize $w_1^{v_1}$}
      \psfrag{Gv1w1}{\footnotesize $G_{v_1,w_1}$}
      
      \psfrag{w2v1}{\footnotesize $w_2^{v_1}$}
      \psfrag{Gv1w2}{\footnotesize $G_{v_1,w_2}$}

      \psfrag{w3v1}{\footnotesize $w_3^{v_1}$}
      \psfrag{Gv1w3}{\footnotesize $G_{v_1,w_3}$}
      
      \psfrag{v2}{\footnotesize $v_2$}
      
      \psfrag{w1v2}{\footnotesize $w_1^{v_2}$}
      \psfrag{Gv2w1}{\footnotesize $G_{v_2,w_1}$}
      
      \psfrag{w2v2}{\footnotesize $w_2^{v_2}$}
      \psfrag{Gv2w2}{\footnotesize $G_{v_2,w_2}$}

      \psfrag{w3v2}{\footnotesize $w_3^{v_2}$}
      \psfrag{Gv2w3}{\footnotesize $G_{v_2,w_3}$}
      
      \psfrag{v3}{\footnotesize $v_3$}
      
      \psfrag{w1v3}{\footnotesize $w_1^{v_3}$}
      \psfrag{Gv3w1}{\footnotesize $G_{v_3,w_1}$}
      
      \psfrag{w2v3}{\footnotesize $w_2^{v_3}$}
      \psfrag{Gv3w2}{\footnotesize $G_{v_3,w_2}$}

      \psfrag{w3v3}{\footnotesize $w_3^{v_3}$}
      \psfrag{Gv3w3}{\footnotesize $G_{v_3,w_3}$}

      \includegraphics{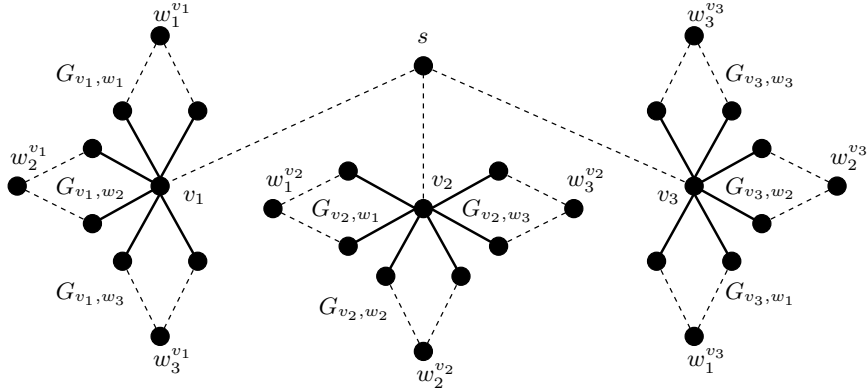}
    \end{center}
    \caption{A sample of graph $G'$, where graph~$G$ in $\mathcal{L}$ is $K_{3,3}$.} \label{fig2}
  \end{figure}

We now form scenario set $\Gamma$. We first note that all dummy edges under all scenarios  have costs equal to 0. We say that two label edges are \emph{label-distinct} if they do not assign the same label to any vertex $v$ or $w$. Namely, $(v,u_{a_i,b_i}^{v,w})$ and $(v',u_{a'_i,b'_i}^{v',w'})$ are label-distinct if $a_i=a_i'$ implies $v\neq v'$ and $b_i=b_i'$ implies $w\neq w'$.
Consider vertex $v\in V$, for which there is the set of $p=|\delta(v)|$ components  $\mathcal{G}=\{G_{v,w_1},\dots,G_{v,w_p}\}$. For every subset $\mathcal{F}\subseteq \mathcal{G}$ of exactly $g$ components,
$\mathcal{F}=\{G_{v,w_1},\dots,G_{v,w_g}\}$ and for every $g$-tuple of pairwise label-distinct edges 
$((v,u_{a_1,b_1}^{v,w_1}),\dots,(v,u_{a_g,b_g}^{v,w_g}))\in E^l_{v,w_1}\times\dots\times E^l_{v,w_g}$ we form scenario under which all these edges have cost~1 and all the remaining edges have cost~0. We repeat this procedure for all vertices $v\in V$. Consider then vertex $w\in W$, for which there is the set of $q=|\delta(w)|$ components $\mathcal{G}=\{G_{v_1,w},\dots,G_{v_q,w}\}$. For every subset $\mathcal{F}\subseteq \mathcal{G}$ of exactly $g$ components,
$\mathcal{F}=\{G_{v_1,w},\dots,G_{v_g,w}\}$ and for every $g$-tuple of pairwise label-distinct edges
 $((v_1,u_{a_1,b_1}^{v_1,w}),\dots,(v_g,u_{a_g,b_g}^{v_g,w}))\in E^l_{v_1,w}\times\dots\times E^l_{v_g,w}$ we form scenario under which all these edges have cost~1 and all the remaining edges have cost~0. We repeat this for all vertices $w\in W$.
 In order to ensure $\Gamma\not=\emptyset$,
we  include in $\Gamma$ the scenario
in which every edge has zero cost.

Assume that $\mathbf{w}\in L$ and thus $val(\mathcal{L})=1$.
Thus, there exists a total labeling $l$ satisfying all edges in $G$
such that  $\max_{x\in V\cup W}|l(x)|=1$. 
Each edge $(v_i,w_i)\in E$ in $G$ corresponds to the exactly one
  component~$G_{v_i,w_i}$
 in $G^{'}$. 
 Let $(a_i,b_i)$ be the pair of labels satisfying the edge~$(v_i,w_i)$ in 
 total labeling~$l$, i.e. $a_i\in l(v_i)$ and $b_i\in l(w_i)$.
 We form a spanning tree $T$ in $G'$ by adding exactly one edge $(v_i,u^{v_i,w_i}_{a_i,b_i})$ from every component $G_{v_i,w_i}$ and we complete the construction by adding a necessary number of dummy edges. 
 Since the labeling $l$ is such that $\max_{x\in V\cup W}|l(x)|=1$, 
  no pair of label-distinct edges have been chosen  while constructing $T$, 
  so $\sum_{e\in T} c^{S}_e\leq 1$ for all $S\in \Gamma$ and consequently
   $\max_{S\in \Gamma} \sum_{e\in T} c^{S}_e \leq 1$.

Assume that $\mathbf{w}\notin L$ and thus $\max_{x\in V\cup W}|l(x)|\geq N^\gamma\geq g$
 for all total  labellings~$l$. Consider any spanning tree $T$ in $G'$. Without loss of generality, we can assume that $T$ contains exactly one label edge from every component $G_{v,w}$. The set of all label edges contained in $T$ corresponds to a total labeling $l$ of $\mathcal{L}$.  Since $|l(x)|\geq g$,
  for some vertex $x\in V \cup W$, we have to use at least $g$  distinct labels in the labeling $l$. Suppose that $x=v\in V$ and we use distinct labels $a_1,\dots,a_g$ for $v$. Then, $T$ contains pairwise  label-distinct edges $(v,u^{v,w_i}_{a_i,b_i})$, $i=1,\dots,g$, and $\sum_{e\in T}c^{S}_e =g$ under scenario $S$ that correspond to this $g$-tuple of edges.  The reasoning for $x=w$, $w\in W$ is the same. In consequence $\max_{S\in \Gamma} \sum_{e\in T}c^{S}_e = g$ and  $OPT_1(\mathcal{T})=g$.

Let us now examine the size of the resulting instance of the \textsc{Min-max Spanning Tree} problem.
 The size of the set of edges $E^{'}$ is at most $|V|+2|E|N^2$,
 the size of the set of vertices $V^{'}$ 
  is   at most $1+|V|+|E|N^2+|W||V|$ and
 the number of scenarios is
 at most $ 1+2|E|^g N^g N^g$.
 Hence, 
 and from $|E|=|\mathbf{w}|^{O(\log N)}$, we deduce that
 the size  of the constructed instance
 $(G^{'},\Gamma)$
 is $|\mathbf{w}|^{O(g\log N)} N^{O(g)}$, so it can be constructed in $O(|\mathbf{w}|^{O(g\log N)} N^{O(g)})$ time.
\end{proof}

From Theorem~\ref{taprminmax}, we obtain 
the following  result:

\begin{thm}
  The \textsc{Min-max  Spanning Tree} problem with
  nonnegative edge costs under all scenarios
   is not approximable within $O(\log^{1-\epsilon} n)$ for any $\epsilon>0$,
  where $n$ is the input size, unless NP 
  $\subseteq$ DTIME$(n^{\mathrm{poly} \log n})$.
  \label{cmmmst}
\end{thm}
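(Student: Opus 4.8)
The plan is to derive Theorem~\ref{cmmmst} from Theorem~\ref{taprminmax} by making a careful choice of the parameters $N$ and $g$ in the reduction, so that the multiplicative gap $g$ between the $\mathbf{w}\in L$ and $\mathbf{w}\notin L$ cases becomes as large as possible while the instance size stays quasi-polynomial in $|\mathbf{w}|$. Fix a language $L\in\mathrm{NP}$, an input $\mathbf{w}$ of length $m=|\mathbf{w}|$, and let $\gamma>0$ be the constant of Theorem~\ref{taprminmax}. The degrees of freedom are $N>0$ and $g\le N^{\gamma}$; by Theorem~\ref{taprminmax} the resulting instance $\mathcal{T}$ has size $n = m^{O(g\log N)}N^{O(g)}$, satisfies $OPT_1(\mathcal{T})\le 1$ when $\mathbf{w}\in L$, and $OPT_1(\mathcal{T})\ge g$ when $\mathbf{w}\notin L$. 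Hence any $\rho$-approximation with $\rho<g$ would decide $L$; the task is to choose $N,g$ so that $\rho = g$ exceeds $O(\log^{1-\epsilon} n)$ while $n$ remains $m^{\mathrm{poly}\log m}$.

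First I would choose $N$ to be a fixed polynomial in $m$, say $N=m^{c}$ for a suitable constant $c$ (or even $N=m$), so that $\log N = \Theta(\log m)$ and $N^{\gamma}=m^{\Theta(1)}$ is polynomial in $m$. Then I would set $g = \lceil (\log m)^{t}\rceil$ for a constant $t>1$ to be fixed at the end (this is legitimate since $(\log m)^{t}\le N^{\gamma}=m^{\Theta(1)}$ for all large $m$). With these choices, $g\log N = O((\log m)^{t+1})$ and $g\log N_{\text{factor}}$ in the exponent of $N$ is also $O((\log m)^{t+1})$, so the instance size is
\[
n = m^{O((\log m)^{t+1})} = 2^{O((\log m)^{t+2})},
\]
which is quasi-polynomial: $n\le m^{\mathrm{poly}\log m}$, equivalently $\log n = \mathrm{poly}\log m$. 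Conversely $\log n = O((\log m)^{t+2})$ gives $(\log n)^{1-\epsilon} = O((\log m)^{(t+2)(1-\epsilon)})$, and I need this to be $o(g)=o((\log m)^{t})$; that holds as soon as $(t+2)(1-\epsilon)<t$, i.e. $t > (2-2\epsilon)/\epsilon = 2/\epsilon - 2$. So for any fixed $\epsilon>0$ one picks $t$ above this threshold, and then $g = \omega((\log n)^{1-\epsilon})$.

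Putting it together: suppose some algorithm $\mathcal{A}$ approximates \textsc{Min-max Spanning Tree} within $O(\log^{1-\epsilon} n)$ in time polynomial in $n$. Given $\mathbf{w}$, build $\mathcal{T}$ with the parameters above in time polynomial in $n = 2^{O((\log m)^{t+2})}$, run $\mathcal{A}$ on $\mathcal{T}$ to get a spanning tree of max-cost at most $\rho\cdot OPT_1(\mathcal{T})$ where $\rho = O(\log^{1-\epsilon} n) = o(g)$; in particular, for all large $m$, $\rho < g$. If $\mathbf{w}\in L$ then $OPT_1(\mathcal{T})\le 1$ so $\mathcal{A}$ returns a tree of max-cost $\le\rho < g$; if $\mathbf{w}\notin L$ then every tree has max-cost $\ge OPT_1(\mathcal{T})\ge g$. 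Thus comparing the returned value with the threshold $g$ decides $L$. The total running time is polynomial in $n$, hence $n^{O(1)} = 2^{O((\log m)^{t+2})} = m^{\mathrm{poly}\log m}$, placing $L$ in $\mathrm{DTIME}(m^{\mathrm{poly}\log m})$; since $L\in\mathrm{NP}$ was arbitrary, $\mathrm{NP}\subseteq\mathrm{DTIME}(n^{\mathrm{poly}\log n})$. For the finitely many small $m$ where $\rho<g$ might fail, one hard-codes the answers, which does not affect the asymptotic class.

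The only real subtlety — the step I would be most careful about — is bookkeeping the relation between $\epsilon$, $t$, and the exponent of $\log m$ in $\log n$, and making sure the inequality $g\le N^{\gamma}$ is respected for the chosen $N=\mathrm{poly}(m)$ and $g=\mathrm{polylog}(m)$ (it is, comfortably, for large $m$). One should also note that the hardness already holds with nonnegative edge costs, since the reduction of Theorem~\ref{taprminmax} uses only costs in $\{0,1\}$; no extra argument is needed there. Everything else is the standard gap-amplification-to-hardness translation.
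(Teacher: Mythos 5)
Your proposal is correct and follows essentially the same route as the paper: instantiate Theorem~\ref{taprminmax} with $g$ polylogarithmic in $|\mathbf{w}|$ and verify that the instance size stays quasi-polynomial while $g=\omega(\log^{1-\epsilon}n)$. The only (immaterial) difference is that the paper takes $N=\log^{O(\beta)}|\mathbf{w}|$ whereas you take $N=\mathrm{poly}(|\mathbf{w}|)$, which costs one extra factor of $\log|\mathbf{w}|$ in the exponent of $n$ but is absorbed by choosing the exponent $t$ large enough, exactly as you do.
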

\begin{proof}

	Let $\gamma$ be the constant from Theorem~\ref{taprminmax}. For any $\beta>0$ we fix $g=\log^\beta |\mathbf{w}|$ and $N=\log^{O(\beta)} |\mathbf{w}|$, so that inequality $g\leq N^\gamma$ is satisfied for the constant $\gamma$ (see Theorem~\ref{taprminmax}).
  The input size of 
  the resulting instance~$(G^{'},\Gamma)$ from Theorem~\ref{taprminmax} is
  $n=|\mathbf{w}|^{O(g\log N)} N^{O(g)}=
   |\mathbf{w}|^{O(\log^{\beta+\delta}|\mathbf{w}|)}$ 
   for some constant $\delta>0$, so it can be constructed in 
   $O(|\mathbf{w}|^{\mathrm{poly}\log |\mathbf{w}|})$
    time. Since $g=\log^\beta |\mathbf{w}|$  and  $n=2^{O(\log^{\beta+\delta+1}|\mathbf{w}|)}$, we get $g= O(\log^{\frac{\beta}{\beta+\delta+1}}n)$
    and the gap is
    $O(\log^{1-\epsilon}n)$ for any $\epsilon>0$. 
   
\end{proof}

\begin{cor}
  The \textsc{Min-max  Regret  Spanning Tree} problem is not approximable within $O(\log^{1-\epsilon} n)$ for any $\epsilon>0$,
  where $n$ is the input size, unless NP 
  $\subseteq$ DTIME$(n^{\mathrm{poly} \log n})$.
  \label{cmmemst}
\end{cor}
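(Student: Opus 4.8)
The plan is to derive the hardness of \textsc{Min-max Regret Spanning Tree} directly from the hardness of \textsc{Min-max Spanning Tree} established in Theorem~\ref{cmmmst}, by observing that on the instances $\mathcal{T}$ produced in the reduction of Theorem~\ref{taprminmax} the min-max value and the min-max regret value coincide. The key point is that in the constructed graph $G'$ every scenario assigns cost $1$ to a carefully chosen $g$-tuple of pairwise label-distinct label edges and cost $0$ to everything else. Hence under any fixed scenario $S\in\Gamma$ the graph $G'$ still admits a spanning tree of cost $0$: simply pick, in every component $G_{v,w}$, a label edge whose cost is $0$ under $S$ (there is at least one, since only $g$ label edges total are ``active'' under $S$ and each affected component contributes at most one of them, leaving the other $|N_{v,w}|-1\ge 0$ label edges of that component at cost $0$ — and if a component has all its label edges active, one can still route through it, but in fact by construction at most one label edge per component is ever active under a single scenario), then complete with zero-cost dummy edges. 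Therefore $C^*(S)=0$ for every $S\in\Gamma$, and consequently for any spanning tree $T$,
\[
\max_{S\in\Gamma}\left\{\sum_{e\in T}c^S_e - C^*(S)\right\} = \max_{S\in\Gamma}\sum_{e\in T}c^S_e,
\]
so that $OPT_2(\mathcal{T}) = OPT_1(\mathcal{T})$.

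With this identity in hand, the reduction of Theorem~\ref{taprminmax} becomes a reduction to \textsc{Min-max Regret Spanning Tree} with exactly the same gap: $\mathbf{w}\in L \Rightarrow OPT_2(\mathcal{T})\le 1$ and $\mathbf{w}\notin L \Rightarrow OPT_2(\mathcal{T})\ge g$. The instance size and construction time are unchanged, so repeating verbatim the parameter choice of the proof of Theorem~\ref{cmmmst} (fix $g=\log^\beta|\mathbf{w}|$, $N=\log^{O(\beta)}|\mathbf{w}|$, giving $n=|\mathbf{w}|^{O(\log^{\beta+\delta}|\mathbf{w}|)}$ and $g=O(\log^{\frac{\beta}{\beta+\delta+1}}n)$) yields an inapproximability gap of $O(\log^{1-\epsilon}n)$ for every $\epsilon>0$ under the assumption NP $\subseteq$ DTIME$(n^{\mathrm{poly}\log n})$, which is the claimed statement.

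The one step that needs genuine care — the only real obstacle — is verifying $C^*(S)=0$ for all $S\in\Gamma$, i.e.\ that after fixing a scenario one can always assemble a spanning tree of $G'$ out of zero-cost edges. This reduces to checking that for each component $G_{v,w}$ there is a label edge $(v,u^{v,w}_{a,b})$ that is not the unique active edge selected for that component in the tuple defining $S$; since $G_{v,w}$ has $|N_{v,w}|\ge 1$ label edges and the scenario $S$ forces cost $1$ on at most one label edge per component (by the construction, which picks one edge from each of $g$ distinct components), this holds whenever $|N_{v,w}|\ge 2$, and when $|N_{v,w}|=1$ one notes that such a degenerate component has its label forced anyway — but more simply, one observes that the all-zero scenario is in $\Gamma$ and the argument above shows any individual scenario leaves enough zero-cost label edges; routing through $G_{v,w}$ to reach $w^v$ and then to $w$ via the dummy edge, together with the zero-cost dummy edges $(s,v)$, connects all of $V\cup W$ and the internal vertices, giving the required spanning tree. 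Once $C^*(S)=0$ is secured the rest is immediate.
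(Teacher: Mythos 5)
Your proposal is correct and follows essentially the same route as the paper: show that on the instances of Theorem~\ref{taprminmax} every scenario activates at most one label edge per component, so $C^*(S)=0$ and $OPT_2(\mathcal{T})=OPT_1(\mathcal{T})$, then inherit the gap and parameter choices from Theorem~\ref{cmmmst}. The paper likewise isolates the degenerate case $|N_{v,w}|=1$ and disposes of it by preprocessing the \textsc{Label Cover} instance (the forced label assignment is propagated and the edge removed), which matches your primary handling; your ``more simply'' aside invoking the all-zero scenario is unnecessary and does not actually resolve that case, but it is not needed.
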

\begin{proof}
 The corollary follows easily if we assume that each component $G_{v,w}$ in the construction from Theorem~\ref{taprminmax} has at least 2 label edges or, equivalently, every edge in the instance of \textsc{Label Cover} has at least two pairs of labels. In this case, under every scenario $S\in \Gamma$, there is a spanning tree of 0~cost (recall that we never assign two~1's to the same component in $S$). Hence $OPT_1(\mathcal{T})=OPT_2(\mathcal{T})$ and the proof is completed. If some edge in the instance of \textsc{Label Cover} has only one pair of labels, then this pair trivially forces an assignment of labels to two vertices, which (after checking consistency with other edges) can be removed from the instance before applying the construction from Theorem~\ref{taprminmax}.

\end{proof}

Up to this point we have assumed that the edge costs under all scenarios are nonnegative.
The following 
theorem  demonstrates that violation of this assumption makes the \textsc{Min-max Spanning Tree} problem not at all approximable:

\begin{thm}
      If both positive and  
	negative costs are allowed, then the 
	\textsc{Min-max  Spanning Tree} problem
	is not at all approximable unless P=NP  even for edge series-parallel graphs 
	\label{tmmna}
\end{thm}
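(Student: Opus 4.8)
The plan is to reduce from an NP-complete problem—\textsc{Partition} is the natural choice, since we want a ``gap'' between a zero-cost outcome and a strictly positive one, and negative costs let us encode an exact equation. First I would take an instance of \textsc{Partition}, i.e.\ positive integers $a_1,\dots,a_m$ with $A=\sum_i a_i$, and ask whether there is $I\subseteq\{1,\dots,m\}$ with $\sum_{i\in I}a_i=A/2$. I would build an edge series-parallel graph consisting of $m$ ``gadgets'' composed in series, where the $i$-th gadget is a pair of parallel edges between two terminals: a ``left'' edge $\ell_i$ and a ``right'' edge $r_i$. Any spanning tree must, in each gadget, pick exactly one of $\ell_i, r_i$ (picking both would create a cycle, picking neither would disconnect the graph), so a spanning tree corresponds exactly to a subset $I=\{i : \ell_i \text{ chosen}\}$.

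Next I would design two scenarios. Under scenario $S_1$, set $c^{S_1}_{\ell_i}=a_i$ and $c^{S_1}_{r_i}=0$ for every $i$, so that the cost of the tree corresponding to $I$ is $\sum_{i\in I}a_i$. Under scenario $S_2$, set $c^{S_2}_{\ell_i}=-a_i$ and $c^{S_2}_{r_i}=0$, so that the cost of the same tree is $-\sum_{i\in I}a_i$. Now shift both scenarios by a constant offset using auxiliary fixed edges (a short series path of edges forced into every spanning tree) so that scenario $S_2$'s tree cost becomes $A/2-\sum_{i\in I}a_i$ and $S_1$'s becomes $\sum_{i\in I}a_i - A/2$; concretely, add one forced edge of cost $-A/2$ that is present under $S_1$ (cost $-A/2$) and $+A/2$ under $S_2$. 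Then $\max\{\sum_{i\in I}a_i - A/2,\ A/2-\sum_{i\in I}a_i\} = |\sum_{i\in I}a_i - A/2|$, which equals $0$ precisely when $\sum_{i\in I}a_i = A/2$ and is at least $1$ (since the $a_i$ are integers) otherwise. Hence $OPT_1=0$ iff the \textsc{Partition} instance is a yes-instance, and $OPT_1\ge 1$ otherwise.

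The final step is the standard observation that a zero/positive gap rules out \emph{any} approximation: if a polynomial-time algorithm had finite approximation ratio $\rho$, then on yes-instances it would return a tree of cost $\le \rho\cdot 0 = 0$, so we could distinguish yes- from no-instances in polynomial time, forcing P=NP. I would also remark that the construction is edge series-parallel by design (series composition of parallel two-edge gadgets, plus the forced edge in series), matching the claim, and note that the result strengthens Theorem~\ref{cmmmst} by removing the $O(\log^{1-\epsilon}n)$ slack at the price of allowing negative costs.

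\textbf{Main obstacle.} The delicate point is engineering the constant offset so the two scenario costs become exact negatives of each other \emph{while keeping the graph series-parallel and the reduction polynomial}; one must be careful that the forced ``offset'' edges genuinely lie in every spanning tree (e.g.\ bridges in a series composition) and that their scenario-dependent costs are chosen consistently. Everything else—the subset/spanning-tree correspondence in parallel gadgets, and the gap-implies-inapproximability argument—is routine.
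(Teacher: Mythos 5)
Your proof is correct, but it takes a genuinely different route from the paper's. You reduce from \textsc{Partition} using just two scenarios whose label-edge costs are negatives of each other, plus a forced bridge that shifts the two scenario costs by $\mp A/2$, so the max over the two scenarios becomes $\left|\sum_{i\in I}a_i-A/2\right|$ and the yes/no gap is $0$ versus positive. The paper instead gives a gap reduction from \textsc{3-SAT}: a series chain of clause gadgets with a scenario for every pair of contradictory literal occurrences (cost $4m-1$ on those two edges, $-1$ elsewhere), so that a satisfying assignment yields a tree of max cost $0$ while unsatisfiability forces max cost at least $4m$. The trade-offs: your construction uses only $K=2$ scenarios, so it additionally shows that even the bounded-scenario problem (which admits an FPTAS for nonnegative costs) becomes inapproximable once negative costs are allowed; on the other hand, \textsc{Partition} is only weakly NP-complete, so your hardness relies on costs that may be exponential in the number of gadgets, whereas the paper reduces from a strongly NP-complete problem with polynomially bounded costs ($-1$ and $4m-1$), which also rules out pseudo-polynomial-time approximation. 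Two small points to tidy up: assume $A$ is even (otherwise the \textsc{Partition} instance is a trivial no-instance, and your claimed gap of $1$ would only be $1/2$), and if you want a simple graph rather than a multigraph, subdivide each right edge $r_i$ into a zero-cost path of length two --- a spanning tree then takes either $\ell_i$ plus one subdivision edge or both subdivision edges, so it still contributes exactly $c_{\ell_i}$ when $i\in I$ and $0$ otherwise, and the graph remains edge series-parallel.
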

\begin{proof}
We show a gap-introducing reduction from \textsc{3-SAT}
which is known to be strongly NP-complete~\cite{GJ79}.
\begin{description}
\item[\mdseries \scshape 3-SAT:]
\emph{Input:}  A set $U=\{x_1,\dots,x_n\}$ of 
Boolean variables and a collection 
$C=\{C_1,\dots,C_m\}$ of clauses, 
where every clause in $C$  has exactly 
three distinct literals.

\noindent\emph{Question:} If there is an assignment 
to~$U$ that satisfies all clauses in~$C$?
\end{description}

We will
assume that in the instance of \textsc{3-SAT} for every variable $x_i$ both $x_i$ and $\sim x_i$ appear in~$C$. Obviously,
under such assumption~\textsc{3-SAT} remains strongly 
NP-complete.
Given an instance of~\textsc{3-SAT} we construct an instance of \textsc{Min-max Spanning Tree} as follows.  For each clause $C_i=(l_i^1\vee l_i^2 \vee l_i^3)$
we create a graph~$G_i$ composed of~5 vertices: 
$s_i,v_1^i,v_2^i,v_3^i,t_i$ and 6 edges: the edges
$(s_i,v_1^i)$, $(s_i,v_2^i)$, $(s_i,v_3^i)$
correspond to literals in $C_i$,
the edges $(v_1^i,t_i)$, $(v_2^i,t_i)$, $(v_3^i,t_i)$
have  costs equal to~$-1$  under every scenario.
In order to construct a connected graph~$G=(V,E)$
with $|V|=4m+1$, $|E|=6m$, we 
 identify vertex~$t_i$ of $G_i$ with vertex~$s_{i+1}$ of~$G_{i+1}$
 for $i=1,\ldots m-1$.
 Note that the resulting graph~$G$ is edge series-parallel.  
 Finally,
 we form scenario set $\Gamma$ as follows.
For every pair of edges of $G$, $(s_i,v_j^i)$ and $(s_q,v_r^q)$,
that correspond to contradictory
 literals  $l_i^j$ and $l_q^r$, i.e. $l_i^j=\sim l_q^r$, we create scenario $S$ such that under this scenario the costs of the 
 edges $(s_i,v_j^i)$ and $(s_q,v_r^q)$ are set to~$4m-1$ and the costs of all the remaining edges are set to~$-1$.  It is easy to
 verify that each spanning tree~$T$ in the constructed 
 instance  has
 nonnegative maximal cost  over all scenarios.
 
Suppose that \textsc{3-SAT} is satisfiable. Then there exists
a spanning tree~$T$ of $G$ containing exactly $4m$ edges that
 do not correspond to contradictory literals. Thus,
 under every scenario~$S$, the tree contains at most one edge with the cost $4m-1$ and all the remaining $4m-1$ edges have costs equal to~$-1$. In consequence we get
 $\sum_{e\in T}c^{S}_e=0$ under every $S\in \Gamma$ and $OPT_1=0$.
 If \textsc{3-SAT} is unsatisfiable, then
 every spanning trees~$T$  of $G$ contains at least two edges
 which correspond to  contradictory literals, and
 so $OPT_1=\max_{S\in \Gamma}\sum_{e\in T}c^{S}_e \geq 4m$.
 Consequently
 \textsc{Min-max  Spanning Tree}
	is not approximable, unless P=NP. Otherwise,
	any polynomial time approximation
	algorithm applied to the constructed instance
	could decide if an instance of \textsc{3-SAT} is satisfiable.
\end{proof}

\subsection{Randomized algorithm
for min-max  spanning tree} \label{randmax}

If the edge costs are nonnegative under all scenarios, then  the \textsc{Min-max Spanning Tree} problem 
is approximable within~$K$, $K$ is the number of scenarios, 
and this is
the best approximation ratio known so far~\cite{ABV06}. On the other hand the problem is not at all approximable if negative costs are allowed (Theorem~\ref{tmmna}).
In this section, we assume that all costs are nonnegative and we give a polynomial time approximation algorithm
for the problem which
returns an $O(\log^2 n)$-approximate spanning tree, where $n$ is the number of vertices of $G$.
The algorithm is based on a randomized rounding of  a solution
to an iterative linear program.

It is easy to check that binary solutions to the following program
 $LP_{\min\max}(C)$ are in one-to-one correspondence with
 solutions to \textsc{Min-max Spanning Tree} of
 edge costs in every scenario at most $C$:
\begin{eqnarray}
	LP_{\min\max}(C):
		&&\sum_{e\in E}c^{S}_ex_e\leq C\;\;\forall_{S\in \Gamma},\label{c2}\\
		&&\sum_{e\in E}x_e=n-1,\label{d1}\\
	      &&\sum_{e\in \delta(W)}x_e \geq 1
	      \;\;\forall_{W\subset V}, \label{d2}\\
				&&0\leq x_e\leq1 \;\;
		\forall_{e\in E}, \label{d3}\\
		&&\text{if }c^{S}_e>C\text{ then }x_e=0\;\;
		\forall_{e\in E}\text{ and } \forall_{S\in \Gamma},
\end{eqnarray}
where $\delta(W)$ denotes the cut determined by vertex set $W$, i.e.
$\delta(W)=\{ (i,j) \in E\, :\, i\in W, j\in V\setminus W\}$. 
The core of $LP_{\min\max}(C)$ 
(constraints~(\ref{d1})-(\ref{d3}))
is  the relaxation  of the \emph{cut-set formulation} for spanning 
tree~\cite{TLMLAW95}.
The polynomial time solvability of $LP_{\min\max}(C)$ follows from an efficient  polynomial time separation based on 
the min-cut problem (see~\cite{TLMLAW95}).
Solving $LP_{\min\max}(C)$ consists in rejecting all edges $e\in E$ having $c_e^S>C$ under some scenario $S\in \Gamma$ and solving then the resulting linear programming problem.
Using  binary search in $[0,(n-1)c_{\max}]$, where $c_{\max}=\max_{e\in E} \max_{S\in \Gamma} c_e^S$,
one can find the minimal 
value of parameter~$C$, for which there is
a feasible solution to $LP_{\min\max}(C)$. 
Let $\widehat{C}$ be
this minimal value
and let  $(\hat{x}_e)_{e\in E}$ be
a feasible solution to $LP_{\min\max}(\widehat{C})$. Clearly $\widehat{C}\leq OPT_1$. Furthermore,
 if $\hat{x}_e>0$, then $c^{S}_e\leq \widehat{C}$
 and thus $c^{S}_e\leq OPT_1$  for each scenario $S\in \Gamma$. 

We now give an algorithm that randomly rounds a feasible solution 
of  $LP_{\min\max}(\widehat{C})$ to 
an $O(\log^2 n)$-approximate min-max spanning tree
(see Algorithm~\ref{algminmax}).
\begin{algorithm}
Use  binary search in $[0,(n-1)c_{\max}]$ to find 
 the minimal value of $C$ such that there exists
 a feasible solution to $LP_{\min\max}(C)$, i.e., $\widehat{C}$ and
 $(\hat{x}_e)_{e\in E}$.\\
Initially $\hat{F}$ contains only vertices of $G$, that is $n$ components. \\
$r\leftarrow \lceil 2(11+\sqrt{21})\ln n\rceil$ \\
\For{$k\leftarrow 1$  \emph{\KwTo} r}
    {
    For all $e\in E$, add edge $e$ independently with probability~$\hat{x}_e$ to~$\hat{F}$. \\
    \If{$\hat{F}$ $\mathrm{is}$ $\mathrm{connected}$}{\KwExitFor}
}
 \If{$\hat{F}$ $\mathrm{is}$ $\mathrm{connected}$} {\Return{{\rm a spanning tree of} $\hat{F}$}}
  \caption{Randomized algorithm for \textsc{Min-max  Spanning Tree}
  \label{algminmax}
}
\end{algorithm}

Let us analyze Algorithm~\ref{algminmax}. Obviously the
algorithm is polynomial. The following lemma 
shows that the total cost of edges included in each iteration
under any scenario $S\in \Gamma$ is  $O(\ln n) OPT_1$
with probability at least $1-\frac{1}{n}$:
\begin{lem}
Let $\hat{E}_k$ be a set of edges added to $\hat{F}$ 
at iteration~$k$ of  Algorithm~\ref{algminmax}
and let $K\leq n^{\rho_2}$, $1\leq f\leq n^{\rho_3}$, where
$f$, $\rho_1$, $\rho_2$,  $\rho_3$ are nonnegative
constants such that $\rho_2+\rho_3\leq 3.92\cdot \rho_1$, 
$\rho_1\geq 2$.
Then
\begin{equation}
\label{bminmax}
\max_{S\in \Gamma}\sum_{e\in \hat{E}_k}c^{S}_e
\leq \left(\rho_1\ln n+1.5\right) 
 \left(1+ 2\sqrt{1+\frac{\ln K+ \ln f}{\rho_1\ln n}}\right)OPT_1
\end{equation}
holds with probability at least $1-\frac{1}{fn^{\rho_1-1}}$.
 \label{lminmax}
\end{lem}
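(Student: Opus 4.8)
The plan is to fix one scenario, bound by a multiplicative Chernoff argument the cost of the edges added during a single iteration, and then take a union bound over the $K$ scenarios of $\Gamma$. First I would recall the two facts about the fractional optimum $(\hat x_e)_{e\in E}$ of $LP_{\min\max}(\widehat{C})$ already isolated before the lemma: $\widehat{C}\le OPT_1$, and $\hat x_e>0$ forces $c^S_e\le\widehat{C}\le OPT_1$ for every $S\in\Gamma$ (this is the last constraint of the program). The degenerate case $OPT_1=0$ is disposed of separately and trivially --- then $\hat x_e>0$ implies $c^S_e=0$ for all $S$, so $\hat E_k$ has zero cost under every scenario --- and from now on $OPT_1>0$.

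Next, fix $S\in\Gamma$. I would write $\sum_{e\in\hat E_k}c^S_e=\sum_{e\in E}c^S_e Y_e$ with the $Y_e\sim\mathrm{Bern}(\hat x_e)$ independent (only edges with $\hat x_e>0$ contribute), and pass to the scaled sum $Z=\frac{1}{OPT_1}\sum_{e\in E}c^S_e Y_e$. By the two facts above, $Z$ is a sum of independent variables each valued in $[0,1]$, with mean $\mu=\frac{1}{OPT_1}\sum_{e\in E}c^S_e\hat x_e\le\widehat{C}/OPT_1\le1$ by constraint~(\ref{c2}). I would then invoke the multiplicative Chernoff bound in the form $\Pr[Z\ge t]\le\exp\!\left(-(t-\mu)^2/(t+\mu)\right)$, valid for $t\ge\mu$; since $\mu\mapsto(t-\mu)^2/(t+\mu)$ is decreasing, this bound is largest at $\mu=1$, so $\Pr[Z\ge t]\le\exp\!\left(-(t-1)^2/(t+1)\right)$ for every $t>1$.

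Finally, taking $t$ equal to the bracketed factor of~(\ref{bminmax}) --- writing $a=\rho_1\ln n$ and $b=\ln K+\ln f$, so that $t=(a+1.5)\bigl(1+2\sqrt{1+b/a}\bigr)$ --- I would check $(t-1)^2/(t+1)\ge\ln f+\ln K+(\rho_1-1)\ln n$, which makes $\Pr[Z\ge t]\le 1/(fKn^{\rho_1-1})$, so that a union bound over the $K$ scenarios yields the failure probability $1/(fn^{\rho_1-1})$ claimed. The verification uses $(t-1)^2/(t+1)=t-3+4/(t+1)\ge t-3$ together with the estimate $t\ge(a+1.5)+2\sqrt{a^2+ab}+3\ge a+b+4.5$, whose middle step needs $2\sqrt{a^2+ab}\ge b$, i.e. $b\le(2+2\sqrt2)a$; this is exactly what the hypotheses $K\le n^{\rho_2}$, $f\le n^{\rho_3}$, $\rho_2+\rho_3\le 3.92\,\rho_1$ supply, since they give $b\le 3.92\,a<(2+2\sqrt2)\,a$, and $\rho_1\ge2$ then handles the comparison of $(\rho_1-1)\ln n$ with $a$. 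The one genuinely delicate part is precisely this last chain of inequalities: the square root in~(\ref{bminmax}) is what inverting the Chernoff bound produces once the target failure probability mixes a $\ln K$ (scenario-count) term with the $(\rho_1-1)\ln n$ term, and one must confirm that the crude substitution $\mu=1$ together with the small additive corrections ($+1.5$ in the prefactor, $-3$ from completing the square) still leaves enough slack --- which is where the constant $3.92$ and the bound $\rho_1\ge2$ are chosen to be safe.
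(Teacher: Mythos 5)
Your proof is correct, but it follows a genuinely different route from the paper's. The paper fixes a scenario $S$, sorts the edges by $c^S_e$, partitions them into at most $n$ groups each of fractional weight roughly $\rho_1\ln n$, applies the Chernoff bound for $0/1$ Poisson trials to the \emph{number} of selected edges in each group, and then converts counts into costs via a shifting argument that compares each group's maximum cost to the previous group's minimum and relates $\sum_l \min_{e\in G^{(l)}}c^S_e$ to $OPT_1$ through the LP objective; the union bound is taken over groups and scenarios, which is why the per-group failure probability is pushed down to $1/(fKn^{\rho_1})$. You instead normalize the weighted sum by $OPT_1$, use the two facts $\widehat{C}\le OPT_1$ and ($\hat x_e>0\Rightarrow c^S_e\le\widehat{C}$) to get independent summands in $[0,1]$ with mean at most $1$, and apply the multiplicative Chernoff bound $\Pr[Z\ge t]\le\exp\bigl(-(t-\mu)^2/(t+\mu)\bigr)$ directly to the weighted sum, with a union bound only over the $K$ scenarios. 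This is shorter and cleaner: it dispenses with the sorting/grouping machinery entirely, and your numerical chain ($(t-1)^2/(t+1)\ge t-3$ together with $2\sqrt{a^2+ab}\ge b$, guaranteed by $b\le 3.92a<(2+2\sqrt2)a$) does close the required inequality with room to spare. The only point worth flagging is that the Chernoff inequality you invoke must be the version valid for independent $[0,1]$-valued (not merely $\{0,1\}$-valued) summands, together with the monotonicity in $\mu$ that lets you substitute $\mu=1$; both are standard, and the paper's restriction to $0/1$ trials is presumably the reason it resorts to grouping in the first place.
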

\begin{proof}
See Appendix~\ref{dod}.
\end{proof}

We now analyze the feasibility of an output solution~$\hat{F}$.
Let $\hat{F}_k$ be the forest obtained from $\hat{F}_{k-1}$ after  the~$k$-th iteration.
Initially, $\hat{F}_0$, $\hat{F}_0\subset G$, has no edges.
Let $C_k$ denote the number of connected components  of~$\hat{F}_k$. Obviously,
$C_0=n$.
We say that an iteration~$k$ is ``successful'' if
either $C_{k-1}=1$ ($\hat{F}_{k-1}$ is connected) or
$C_{k}<0.9C_{k-1}$;
otherwise, it is ``failure''.
We now recall a result of Alon~\cite{A95} (see also~\cite{DRM05}).
His proof is repeated in Appendix~\ref{dod} for completeness.
\begin{lem}[Alon~\cite{A95}]
For every~$k$, the conditional probability that iteration~$k$ is  
``successful'', given any set of components in $\hat{F}_{k-1}$,
is at least $1/2$.
\label{lalon}
\end{lem}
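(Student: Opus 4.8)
The plan is to condition on the component structure of $\hat F_{k-1}$ and then reduce the whole statement to a single application of Markov's inequality. If $C_{k-1}=1$ the iteration is successful by definition, so I would fix $C:=C_{k-1}\ge 2$ together with the vertex sets $\mathcal C_1,\dots,\mathcal C_C$ of the components of $\hat F_{k-1}$; after this conditioning the only remaining randomness is the independent inclusion of each edge $e$ with probability $\hat x_e$. In this case an iteration is a \emph{failure} precisely when $C_k\ge 0.9\,C$, so it suffices to show $\Pr[C_k\ge 0.9\,C]\le \tfrac12$.

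First I would prove a per-component estimate. Call a component $\mathcal C_i$ \emph{lonely} if no edge of the cut $\delta(\mathcal C_i)$ is added in iteration $k$. Since $\mathcal C_i$ is a nonempty proper subset of $V$, constraint~(\ref{d2}) of $LP_{\min\max}(\widehat C)$ gives $\sum_{e\in\delta(\mathcal C_i)}\hat x_e\ge 1$, and since the edge inclusions are independent and each $\hat x_e\in[0,1]$,
\[
\Pr[\,\mathcal C_i \text{ lonely}\,]=\prod_{e\in\delta(\mathcal C_i)}(1-\hat x_e)\le \exp\!\Bigl(-\!\!\sum_{e\in\delta(\mathcal C_i)}\!\hat x_e\Bigr)\le e^{-1}.
\]
Writing $\ell$ for the number of lonely components, linearity of expectation yields $\mathbb E[\ell]\le C/e$.

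Second I would establish the purely combinatorial bound $C_k\le (C+\ell)/2$. Every component of $\hat F_k$ is either a single lonely old component or a union of at least two old components; and because $\hat F_{k-1}$ is a forest whose components partition $V$, no pre-existing edge joins two distinct old components, so any old component that ends up merged must have had an incident edge added in iteration $k$ and is therefore not lonely. Hence the merged components of $\hat F_k$, say $b$ of them, jointly cover exactly the $C-\ell$ non-lonely old components, giving $2b\le C-\ell$ and $C_k=b+\ell\le (C+\ell)/2$. In particular $C_k\ge 0.9\,C$ forces $\ell\ge 0.8\,C$, so by Markov's inequality
\[
\Pr[C_k\ge 0.9\,C]\le \Pr[\ell\ge 0.8\,C]\le \frac{\mathbb E[\ell]}{0.8\,C}\le \frac{1}{0.8\,e}=\frac{5}{4e}<\frac12 ,
\]
which is exactly what was needed. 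The only step that needs genuine care is the combinatorial claim of the third paragraph — specifically the observation that a merged old component cannot be lonely, which rests on $\hat F_{k-1}$ being a forest with components partitioning $V$ — and it is this claim together with the threshold $0.9$ that makes the final constant $5/(4e)$ fall below $\tfrac12$; everything else is routine estimation.
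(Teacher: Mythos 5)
Your proof is correct and follows essentially the same route as the paper's: the same $e^{-1}$ bound on a component remaining lonely via constraint~(\ref{d2}), the same combinatorial observation that each merged new component absorbs at least two non-lonely old components (so $C_k\le(C+\ell)/2$), and Markov's inequality applied to the number of lonely components. The only cosmetic difference is that you apply Markov directly at the threshold $0.8\,C$ to get $5/(4e)<1/2$, whereas the paper bounds $\ell\le 2C/e$ with probability at least $1/2$ and then checks $(1/2+1/e)<0.9$.
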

 From Lemma~\ref{lalon}, it follows that the
probability of the event that 
iteration~$k$ is ``successful'' is at least $1/2$.
This is a lower bound on the probability of
 success of given any history.
Note that, if forest $\hat{F}_k$ is not
connected ($C_k>1$) then the number of ``successful''
iterations has been less than $\log_{0.9} n < 10\ln n$.
Let~$\mathrm{X}$ be a random variable denoting the number
of  ``successful''
iterations among $r$~performed iterations of the algorithm.
The probability $\mathrm{Pr}[\mathrm{X}<10\ln n]$ can be upper bounded
by $\mathrm{Pr}[\mathrm{Y}<10\ln n]$, where
$\mathrm{Y}=\sum_{k=1}^{r}\mathrm{Y}_k$ is the sum of~$r$ independent
 Bernoulli trials
such that $\mathrm{Pr}[\mathrm{Y}_k=1]=1/2$. This estimation
can be done, since we have 
a lower bound on  success of given any history.
Clearly, $\mathbf{E}[\mathrm{Y}]=r/2$.
We apply the Chernoff bound (see for instance~\cite{MR95})
and determine the values of~$\delta\in (0,1]$ and $r$
in order to fulfill the following inequality:
\begin{equation}
\mathrm{Pr}[\mathrm{X}<10\ln n]\leq 
\mathrm{Pr}[\mathrm{Y}<10\ln n]=
\mathrm{Pr}[\mathrm{Y}<(1-\delta)\mathbf{E}[\mathrm{Y}]]<
 \mathrm{e}^{-\mathbf{E}[\mathrm{Y}]\delta^2/2}=\frac{1}{n}.
 \label{rchbralg}
\end{equation}
It is easily seen that inequality~(\ref{rchbralg}) holds
if the following system of equations
\begin{equation}
\begin{cases}
 (1-\delta)r/2 =10\ln n,\\
 r\delta^2/4= \ln n
\end{cases}
\label{rsyst}
\end{equation}
holds true.
An easy computation for $\delta$ and $r$
in (\ref{rsyst}), shows that 
$r=2(11+\sqrt{21})\ln n,
\;\;\delta=\sqrt{\frac{2}{11+\sqrt{21}}}$.
Hence, after $r$ iterations,
$r= \lceil2(11+\sqrt{21})\ln n\rceil$, 
we obtain 
with probability at least $1-1/n$ a spanning tree.
By the union bound and Lemma~\ref{lminmax} (set $f=r$),
with probability at least $1-1/n$ 
in every iteration, $k=1,\ldots,r$,
the set of edges~$\hat{E}_k$
included at iteration~$k$ satisfies the bound~(\ref{bminmax}).
We conclude that after $r$ iterations, 
we get
with probability at least $1-2/n$ a spanning tree
whose total cost in every scenario  is $O(r\ln n)OPT_1$.
We have, thus proved the following theorem:
\begin{thm}
There is a polynomial time randomized algorithm for
 \textsc{Min-max  Spanning Tree} 
that returns with probability at least $1-\frac{2}{n}$  a solution 
whose total cost  in every scenario is
$O(\log^2 n) OPT_1$.
 \label{thm3}
\end{thm}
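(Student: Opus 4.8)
The plan is to read off the theorem by combining Algorithm~\ref{algminmax} with Lemmas~\ref{lminmax} and~\ref{lalon}. The starting observation is that the binary search produces the minimal threshold $\widehat{C}$ for which $LP_{\min\max}(\widehat{C})$ is feasible, and since a min-max optimal spanning tree is an integral feasible point of $LP_{\min\max}(OPT_1)$ we have $\widehat{C}\le OPT_1$; moreover every edge $e$ with $\hat{x}_e>0$ satisfies $c^S_e\le\widehat{C}\le OPT_1$ in every scenario $S\in\Gamma$. Thus the fractional solution $(\hat{x}_e)$ fractionally spans $G$ while being ``cheap'' --- of fractional cost at most $\widehat{C}$ --- under every scenario simultaneously, which is exactly the input situation assumed in Lemma~\ref{lminmax}.

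Next I would establish that the output forest $\hat{F}$ is connected with high probability. By Alon's Lemma~\ref{lalon}, each of the $r$ rounds is ``successful'' --- meaning $\hat{F}$ is already connected or its number of components drops below $0.9$ times the previous count --- with conditional probability at least $1/2$ no matter what the current components are; since fewer than $\log_{1/0.9}n<10\ln n$ successful rounds suffice to connect $\hat{F}$, the number $\mathrm{X}$ of successful rounds stochastically dominates a sum $\mathrm{Y}$ of $r$ independent $\mathrm{Bernoulli}(1/2)$ variables, so $\mathrm{Pr}[\hat{F}\text{ disconnected}]\le\mathrm{Pr}[\mathrm{X}<10\ln n]\le\mathrm{Pr}[\mathrm{Y}<10\ln n]$. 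Picking $r$ and $\delta\in(0,1]$ to solve the system~(\ref{rsyst}), namely $(1-\delta)r/2=10\ln n$ and $r\delta^2/4=\ln n$, gives $r=2(11+\sqrt{21})\ln n$ and $\delta=\sqrt{2/(11+\sqrt{21})}$, and the Chernoff bound then yields $\mathrm{Pr}[\mathrm{Y}<(1-\delta)\mathbf{E}[\mathrm{Y}]]<\mathrm{e}^{-\mathbf{E}[\mathrm{Y}]\delta^2/2}=1/n$, so $\hat{F}$ is a spanning tree with probability at least $1-1/n$.

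Then I would bound the cost. For a fixed round $k$, Lemma~\ref{lminmax} applied with $f=r=O(\log n)$ and constants $\rho_1,\rho_2,\rho_3$ chosen to meet its hypotheses (in particular $K$ and $f$ polynomially bounded in $n$, $\rho_1\ge2$, $\rho_2+\rho_3\le3.92\rho_1$) bounds $\max_{S\in\Gamma}\sum_{e\in\hat{E}_k}c^S_e$ by $O(\log n)\,OPT_1$ with probability at least $1-1/(rn^{\rho_1-1})$. A union bound over the $r$ rounds shows that with probability at least $1-1/n$ every round contributes cost $O(\log n)\,OPT_1$ in every scenario, so $\hat{F}$ --- and hence any spanning tree extracted from it, since the costs are nonnegative --- has total cost $O(r\log n)\,OPT_1=O(\log^2 n)\,OPT_1$ in every scenario. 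A final union bound over the two failure events of the previous steps gives that the algorithm returns, with probability at least $1-2/n$, a spanning tree of cost $O(\log^2 n)\,OPT_1$ in every scenario, which is the claim.

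The main obstacle is Lemma~\ref{lminmax} itself (deferred to Appendix~\ref{dod}): the edges added in a single round form an independent thinning of $(\hat{x}_e)$ whose expected cost in each scenario is at most $\widehat{C}\le OPT_1$, and one must show via a Chernoff/Bernstein-type tail for a sum of bounded independent random variables --- followed by a union bound over the $K$ scenarios --- that the realized cost exceeds its mean by at most an $O(\log n)$ factor with probability $1-1/(fn^{\rho_1-1})$; this is the source of the delicate constants and of the $\sqrt{1+(\ln K+\ln f)/(\rho_1\ln n)}$ factor in~(\ref{bminmax}). The only other point needing care in the main argument is that $r$ must be simultaneously large enough for the Chernoff feasibility bound and $O(\log n)$ so that $r\log n=O(\log^2 n)$, which is precisely what the choice forced by~(\ref{rsyst}) guarantees.
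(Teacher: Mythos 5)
Your proposal is correct and follows essentially the same route as the paper: the same binary search giving $\widehat{C}\le OPT_1$, the same use of Lemma~\ref{lalon} with stochastic domination and the Chernoff bound via system~(\ref{rsyst}) to get connectivity with probability $1-1/n$, the same application of Lemma~\ref{lminmax} with $f=r$ and a union bound over rounds for the cost, and the final union bound yielding $1-2/n$. Your added remark that nonnegativity of costs lets one pass from the forest $\hat{F}$ to an extracted spanning tree without increasing the cost is a small point the paper leaves implicit, but otherwise the arguments coincide.
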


\section{2-stage spanning  tree}

In this section, we discuss 
the \textsc{2-stage spanning tree} problem
in robust optimization setting.
We show that the problem
 is
 hard to approximate   within a ratio of
$O(\log n)$
unless the problems in NP have
quasi-polynomial algorithms.
Then,we give   an LP-based randomized approximation
algorithm
with ratio of $O(\log^2 n)$.

\subsection{Hardness of approximation}

\begin{thm}
  The \textsc{2-Stage Spanning Tree} problem is not approximable within 
  any constant, unless P=NP, and 
  within $(1-\epsilon)\ln n$
  for any $\epsilon>0$, unless NP$\subseteq$DTIME$(n^{\log \log n})$.
  \label{t2smsp}
\end{thm}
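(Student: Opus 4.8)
The plan is to give a polynomial‑time, \emph{value‑preserving} reduction from \textsc{Minimum Set Cover}; the theorem then follows from the classical facts that \textsc{Set Cover} admits no constant‑factor polynomial‑time approximation unless $\mathrm{P}=\mathrm{NP}$ (Raz--Safra), and no $(1-\epsilon)\ln n$‑approximation for any $\epsilon>0$ unless $\mathrm{NP}\subseteq\mathrm{DTIME}(n^{\log\log n})$ (Feige). So fix a \textsc{Set Cover} instance with ground set $U=\{1,\dots,m\}$ and nonempty sets $S_1,\dots,S_\ell$ covering $U$, with optimum $OPT_{SC}$; for the hard instances one may use the standard fact that $\ell$ can be taken comparable to $m$, so that the number of vertices produced below is $m^{1+o(1)}$ and the $\ln n$ threshold is preserved.

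I would build the graph $G=(V,E)$ as follows: a root $r$, a \emph{set‑vertex} $a_j$ for each $j$ and an \emph{element‑vertex} $b_i$ for each $i$, so $|V|=1+\ell+m$; edges $f_j$ and a parallel copy $g_j$ joining $r$ to $a_j$; an edge $h_i$ joining $r$ to $b_i$; and an edge $e_{ij}$ joining $a_j$ to $b_i$ whenever $i\in S_j$ (the two parallel $r$--$a_j$ edges are uncoupled by a trivial subdivision). The scenario set has one scenario $\sigma^i$ per element, so $K=m$. Using a ``forbidden'' value $M:=|V|^{2}$, set the costs: $f_j$ has first‑stage cost $1$ and every other edge has first‑stage cost $M$; and in scenario $\sigma^i$ the second‑stage costs are $c^{\sigma^i}_{f_j}=M$ for all $j$, $c^{\sigma^i}_{e_{i'j}}=0$ iff $i'=i$ (else $M$), $c^{\sigma^i}_{g_j}=0$ iff $i\notin S_j$ (else $M$), and $c^{\sigma^i}_{h_{i'}}=0$ iff $i'\neq i$ (else $M$). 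The idea is that in scenario $\sigma^i$ the only cheap edge incident to $b_i$ is some $e_{ij}$ with $i\in S_j$, which merely hangs $b_i$ on $a_j$; $a_j$ reaches $r$ cheaply only through $f_j$, i.e.\ only if $S_j$ was bought in the first stage; while $g_j$ and $h_{i'}$ let \emph{every other} (necessarily present) vertex be attached to $r$ at zero cost without ever creating a cheap $b_i$--$r$ path.

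Then I would prove $OPT_3=OPT_{SC}$. For ``$\le$'', from a set cover $A$ take $E_1=\{f_j:j\in A\}$ (first‑stage cost $|A|$) and in each $\sigma^i$ complete to a spanning tree using only cost‑$0$ edges: attach every $a_j$ with $j\notin A$, $i\notin S_j$ to $r$ by $g_j$; every $b_{i'}$, $i'\neq i$, to $r$ by $h_{i'}$; $b_i$ to some $a_{j^\ast}$, $j^\ast\in A$, $i\in S_{j^\ast}$, by $e_{ij^\ast}$; and every remaining $a_j$ ($j\notin A$, $i\in S_j$) to $b_i$ by $e_{ij}$ — this uses exactly $|V|-1$ edges and is connected, so the cost is $|A|$ in every scenario, whence $OPT_3\le OPT_{SC}$. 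For ``$\ge$'', if $OPT_3\ge M$ we are done since $OPT_{SC}\le\ell<M$; otherwise an optimal solution uses no forbidden edge, so $E_1=\{f_j:j\in A\}$ with $OPT_3\ge|A|$ and each $E_2^{\sigma^i}$ consists of edges of scenario‑cost $0$. If $A$ missed some element $i_0$, then in $\sigma^{i_0}$ the cost‑$0$ edges attach $b_{i_0}$ only to set‑vertices $a_j$ with $i_0\in S_j$ (hence $j\notin A$), and each such $a_j$ is attached by cost‑$0$ edges only to $b_{i_0}$ ($f_j$ is absent, $g_j$ costs $M$ because $i_0\in S_j$, its other $e_{i'j}$ cost $M$ because $i'\neq i_0$); thus the component of $b_{i_0}$ omits $r$, contradicting that $E_1\cup E_2^{\sigma^{i_0}}$ is a spanning tree. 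So $A$ is a cover and $OPT_3\ge|A|\ge OPT_{SC}$. Composing any polynomial‑time $\rho$‑approximation for \textsc{2-Stage Spanning Tree} with this reduction and reading off $A=\{j:f_j\in E_1\}$ (a genuine cover of size $\le\rho\cdot OPT_{SC}$ whenever the returned value is $<M$, which holds for $\rho$ up to $\mathrm{polylog}$ by the choice of $M$) yields a $\rho$‑approximation for \textsc{Set Cover}, and the two hardness statements carry over.

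I expect the main obstacle to be the gadget itself, precisely because a spanning tree must connect \emph{every} vertex in \emph{every} scenario: a careless reduction lets the solver connect element‑vertices cheaply along long ``chains'' through other element‑ and set‑vertices, committing to no set in the first stage, which would wreck the correspondence with \textsc{Set Cover}. Making each $e_{ij}$ cheap only in its own scenario $\sigma^i$ is what kills these chains, and the auxiliary edges $g_j,h_i$ — cheap in all scenarios except exactly those in which they would reopen a shortcut to some $b_i$ — are what still allow the remaining, mandatory vertices to be connected for free. Verifying this scenario‑by‑scenario cost accounting (and the edge count $|V|-1$ in the ``$\le$'' direction) is the real work; arranging that all costs stay polynomial and that $|V|$ is close enough to $m$ for the $(1-\epsilon)\ln n$ threshold to survive is then routine.
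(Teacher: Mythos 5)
Your reduction is correct and follows essentially the same route as the paper: a cost-preserving reduction from \textsc{Set Cover} with one scenario per ground element, arranged so that an element's vertex can reach the root cheaply in its scenario only through a first-stage edge corresponding to a set containing it, after which the Raz--Safra and Feige hardness bounds transfer verbatim. The paper realizes this with a complete graph and cut-based scenario costs (following the stochastic construction of Flaxman et al.) rather than your sparse big-$M$ gadget with auxiliary edges $g_j,h_i$, but the mechanism and the cost-preservation argument are the same.
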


\begin{proof}
We proceed with a cost preserving reduction from
\textsc{Set Cover} to 
\textsc{2-Stage Spanning Tree}. The reduction
is similar to that in~\cite{FFK06}  for
the 2-stage stochastic spanning tree.
\textsc{Set Cover} is defined as follows (see, e.g.,~\cite{AC95,GJ79}):
\begin{description}
\item[\mdseries \scshape Set Cover:]
\emph{Input:}
A ground set $\mathcal{U}=\{1,\dots,n\}$ and a collection of its subsets
$U_1,\ldots, U_m$ such that $\bigcup_{i=1}^{m} U_i=\mathcal{U}$.

A subcollection $I\subseteq \{1,\dots,m\}$ \emph{covers} $\mathcal{U}$ if 
$\bigcup_{i \in I} U_i=\mathcal{U}$, where $|I|$  is the
\emph{size of the subcollection}.

\noindent\emph{Output:}
A minimum sized subcollection that covers $\mathcal{U}$.
\end{description}

 The \textsc{Set Cover} problem
 is not approximable within 
  any constant, unless P=NP,  and 
  within $(1-\epsilon)\log n$
  for any $\epsilon>0$, unless 
  NP$\subseteq$DTIME$(n^{\log \log n})$, where
  $n$ is the size of the ground set 
   (see~\cite{BGS95, F98}).
For a given instance 
$\mathcal{C}=(\mathcal{U},U_1,\ldots,U_m)$ of~\textsc{Set Cover},
we construct an instance  
$\mathcal{T}=(G=(V,E), \Gamma)$ of
\textsc{2-Stage  Spanning Tree} as follows.
Graph~$G=(V,E)$ is a complete graph  
with $m+n+1$ vertices
$V=\{u_1,\ldots,u_m,1,\ldots,n,r\}$.
Vertices $u_1,\ldots,u_m$ correspond to $m$
subsets $U_1,\ldots,U_m$, vertices $1,\ldots,n$
correspond to $n$ elements of set~$\mathcal{U}$.
The costs of the edges~$(r,u_i)$, $i=1,\ldots,m$,
 in $G$ in the first stage are 
 set to~$1$ and  the costs of all the remaining edges in $G$ are set 
 to~$m+1$.
Now we form scenario set $\Gamma$ in the second
stage. Each scenario $S_{j}\in \Gamma$ corresponds 
to vertex $j$, $j=1,\ldots,n$. Let
$T_j=\{j\} \cup \{u_i\;:\; j\in U_i\}$ and let $(T_j,V\setminus T_j)$
be the cut separating $T_j$ from all other vertices of~$G$.
Each second stage scenario~$S_j$
 is defined as:
the costs of the edges from cut $(T_j,V\setminus T_j)$
 are 
 set to~$m+1$ and  the costs of the remaining edges in $G$ are set 
 to~$0$.
 
We now prove that there is a subcollection 
of size at most $k\leq m$ that covers~$\mathcal{U}$
if and only if there exists a spanning tree in $G$ of the maximum 2-stage cost at most~$k\leq m$.
Given 
a subcollection $U_{i_1},\ldots,U_{i_k}$
of size~$k$ that covers~$\mathcal{U}$.
In the first stage,
we include in $E_1$ the edges 
$(r,u_{i_j})$, where vertices $u_{i_j}$ correspond to
subsets  $U_{i_j}$, $j=1,\ldots,k$. The cost of~$E_1$ is equal to~$k$.
In the second stage, we augment $E_1$ to
form a spanning tree with edges of cost zero in
each scenario $S_{j}$, $j=1,\ldots,n$.
Hence, the maximum 2-stage cost of the obtained spanning tree
equals~$k$.
Conversely, let $T$ be
a spanning tree in $G$
with the maximum 2-stage cost at most~$k$. Hence,
this tree does not contain any edge with cost~$m+1$.
Consequently, in the first stage the tree
 contains $k'\leq k$ edges of the form $(r,u_{i_j})$,
$j=1,\ldots,k^{'}$, and in the second stage in 
each scenario it contains zero cost edges.
 The vertices~$u_{i_j}$
correspond to subsets $U_{i_j}$, $j=1,\ldots,k^{'}$.
It is easily seen that  any element $i\in \mathcal{U}$ must be covered   by at least one 
of subsets $U_{i_j}$, $j=1,\ldots,k^{'}$. Otherwise the
solution would contain an edge of cost~$m+1$.
Thus, $U_{i_j}$, $j=1,\ldots,k^{'}$, form a subcollection 
of the size at most~$k$ that covers~$\mathcal{U}$.

The presented reduction is cost preserving.
Hence, \textsc{2-Stage Spanning Tree}
has the same approximation bounds
as \textsc{Set Cover}.
\end{proof}

\subsection{Randomized algorithm
for 2-stage spanning tree}

In this section we construct a randomized approximation algorithm for \textsc{2-Stage Spanning Tree}, which is based on a similar idea as the corresponding algorithm for \textsc{Min-max Spanning Tree} (see Section~\ref{randmax}). Consider the following program $LP_{2stage}(C)$, whose binary solutions correspond to the solutions of \textsc{2-Stage Spanning Tree}: 

\begin{eqnarray*}
	LP_{2stage}(C):
		&&\sum_{e\in E}c_e x_e +
		\sum_{e\in E}c^{S}_e x^{S}_e
		\leq C\;\;\forall_{S\in \Gamma}\\
		&&\sum_{e\in E}(x_e+x^{S}_e)=n-1\;\forall_{S\in \Gamma}\\
	      &&\sum_{e\in \delta(W)}(x_e +x^{S}_e)\geq 1
	      \;\;\forall_{W\subset V}, \;\forall_{S\in \Gamma}\\
	      &&0\leq x_e, x^{S}_e\leq 1 \;\;
	      \forall_{e\in E},\; \forall_{S\in \Gamma}\\
	      &&\text{if }c_e>C\text{ then }x_e=0\;\;
		\forall_{e\in E}\\
		&&\text{if }c^{S}_e>C\text{ then }x^{S}_e=0\;\;
		\forall_{e\in E},\; \forall_{S\in \Gamma}
\end{eqnarray*}
The algorithm (Algorithm~\ref{alg2stage}) 
randomly rounds a feasible solution 
$\hat{x}_e$, $\hat{x}^{S}_e$, $S\in \Gamma$, $e\in E$,
of  $LP_{2stage}(\widehat{C})$, where
$\widehat{C}$ denotes
the minimal value of~$C$ 
for which there is a feasible solution to $LP_{2stage}(C)$.
\begin{algorithm}
$c_{\max}\leftarrow \max_{e\in E}\{c_e, 
\max_{S\in \Gamma} c_e^S\}$\\
Use 
binary search in $[0,(n-1)c_{\max}]$ to find 
 the minimal value of $C$ such that there exists
 a feasible solution of $LP_{2stage}(C)$, i.e., 
$\hat{x}_e$, $\hat{x}^{S}_e$, $S\in \Gamma$, $e\in E$.\\
Initially $\hat{F}^{S}$  contains 
only vertices of $G$ for $S\in \Gamma$.   \\
$r\leftarrow 
\lceil  (\sqrt{\ln n +\ln K}+\sqrt{21\ln n +\ln K} )^2 \rceil$ \\
\For{$k\leftarrow 1$  \emph{\KwTo} r}
    {
    \emph{In the first stage:} For all $e\in E$, choose edge $e$
     independently with probability~$\hat{x}_e$ 
     and add it to each $\hat{F}^{S}$ for $S\in \Gamma$. \\
    \emph{In the second stage:}  for every $S\in\Gamma$
    and every $e\in E$, add edge $e$
     independently with probability~$\hat{x}^{S}_e$ 
     to~$\hat{F}^{S}$. \\
}
\If{{\rm all} $\hat{F}^S, S\in \Gamma$, {\rm are connected}}{
\Return{$\{\hat{F}^{S}\}_{S\in \Gamma}$}}
  \caption{Randomized algorithm for \textsc{2-stage  Minimum Spanning Tree}
  \label{alg2stage}
}
\end{algorithm}

An analysis of Algorithm~\ref{alg2stage}
proceeds similarly as the one of Algorithm~\ref{algminmax}.
The following lemma holds (the proof goes in similar manner as the proof of Lemma~\ref{lminmax}):
\begin{lem}
Let $\hat{E}_k$ and $\hat{E}^{S}_k$ be the sets of edges 
in the first stage and  in the second stage 
for every $S\in \Gamma$, respectively,
added to $\hat{F}^{S}$
at  iteration~$k$ 
of  Algorithm~\ref{alg2stage}
and let $K\leq n^{\rho_2}$, $1\leq f\leq n^{\rho_3}$, where
$f$, $\rho_1$, $\rho_2$,  $\rho_3$ are nonnegative
constants such that $\rho_2+\rho_3\leq 3.92\cdot \rho_1$, 
$\rho_1\geq 2$.
Then
\begin{equation}
\label{b2stage}
\sum_{e\in \hat{E}_k}c_e
+\sum_{e\in \hat{E}^{S}_k}c^{S}_e
\leq  
\left(\rho_1\ln n+1.5\right) 
 \left(1+ 2\sqrt{1+\frac{\ln K+ \ln f}{\rho_1\ln n}}\right)OPT_3
 \;\; \forall_{S\in \Gamma}
\end{equation}
holds with probability at least $1-\frac{1}{fn^{\rho_1-1}}$.
 \label{l2stage}
\end{lem}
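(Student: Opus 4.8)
The plan is to repeat, essentially verbatim, the randomized-rounding analysis behind Lemma~\ref{lminmax}, with $OPT_1$ replaced throughout by $OPT_3$; the only genuinely new feature is that at a given iteration~$k$ two mutually independent families of edges are drawn --- the first-stage edges, which are added to every $\hat{F}^{S}$, and, for each scenario $S$, that scenario's own second-stage edges. I would begin by reading off from $LP_{2stage}$ the three facts that are needed. Since the optimal $2$-stage solution $(E_1,\{E^S_2\}_{S\in\Gamma})$ induces a feasible $0/1$ point of $LP_{2stage}(OPT_3)$, the smallest feasible value satisfies $\widehat{C}\le OPT_3$; the rejection constraints then give $\hat{x}_e>0\Rightarrow c_e\le\widehat{C}\le OPT_3$ and $\hat{x}^{S}_e>0\Rightarrow c^{S}_e\le\widehat{C}\le OPT_3$; and the budget constraint of $LP_{2stage}(\widehat{C})$ gives $\sum_{e\in E}c_e\hat{x}_e+\sum_{e\in E}c^{S}_e\hat{x}^{S}_e\le\widehat{C}\le OPT_3$ for every $S\in\Gamma$.

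Next I would fix a scenario $S\in\Gamma$ and consider the cost charged to $S$ in iteration~$k$,
\[
W_S=\sum_{e\in\hat{E}_k}c_e+\sum_{e\in\hat{E}^{S}_k}c^{S}_e=\sum_{e\in E}c_eZ_e+\sum_{e\in E}c^{S}_eZ^{S}_e ,
\]
where $Z_e=\mathbf{1}[e\in\hat{E}_k]\sim\mathrm{Bern}(\hat{x}_e)$ and $Z^{S}_e=\mathbf{1}[e\in\hat{E}^{S}_k]\sim\mathrm{Bern}(\hat{x}^{S}_e)$ are mutually independent. By the facts above, $W_S$ is a sum of independent random variables each taking values in $[0,OPT_3]$, with $\mathbf{E}[W_S]\le OPT_3$. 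I would then apply a Chernoff/Bernstein-type tail bound for such sums: after normalizing by $OPT_3$ (so that the normalized mean is at most $1$), one gets for instance $\Pr[W_S>a\cdot OPT_3]\le(\mathrm{e}/a)^{a}$ for every $a\ge 1$, or a Bernstein-type estimate whose closed-form inversion directly reproduces the square-root shape appearing in~(\ref{b2stage}).

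Finally I would set $a$ equal to the coefficient multiplying $OPT_3$ on the right-hand side of~(\ref{b2stage}), namely $a=(\rho_1\ln n+1.5)\bigl(1+2\sqrt{1+(\ln K+\ln f)/(\rho_1\ln n)}\,\bigr)$, and check that $(\mathrm{e}/a)^{a}\le 1/(Kfn^{\rho_1-1})$, i.e.\ $a(\ln a-1)\ge\ln K+\ln f+(\rho_1-1)\ln n$. The hypothesis $\rho_2+\rho_3\le 3.92\,\rho_1$, via $K\le n^{\rho_2}$ and $f\le n^{\rho_3}$, is exactly what forces $(\ln K+\ln f)/(\rho_1\ln n)\le 3.92$, so that the square-root factor, and hence $a/(\rho_1\ln n)$, stays below a fixed constant; combined with $\rho_1\ge 2$ this makes the elementary inequality go through, with room to spare once $n$ exceeds a small constant. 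A union bound over the $K$ scenarios then turns the per-scenario failure probability $1/(Kfn^{\rho_1-1})$ into the asserted bound: with probability at least $1-1/(fn^{\rho_1-1})$ the estimate~(\ref{b2stage}) holds simultaneously for every $S\in\Gamma$.

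The only laborious point is this last numerical verification --- selecting the variant of the tail inequality whose inversion matches the form of~(\ref{b2stage}) and confirming the resulting elementary inequality under the constraints on the $\rho_i$. Everything else is a line-by-line copy of the proof of Lemma~\ref{lminmax}: the first/second-stage split is invisible to the probabilistic argument, since for a fixed $S$ the variable $W_S$ still depends on a single independent family of indicators, and the fact that the first-stage draws are shared across scenarios surfaces only through the union bound, which requires no independence.
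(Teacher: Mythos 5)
Your proposal is correct, but it follows a genuinely different route from the paper's. The paper proves Lemma~\ref{l2stage} by transplanting the proof of Lemma~\ref{lminmax} wholesale: for each scenario it sorts the edge costs in nonincreasing order, partitions the edges into buckets $G^{(l)}$ each carrying LP-mass at most $\rho_1\ln n$, applies the multiplicative Chernoff bound only to the \emph{number} of edges sampled in each bucket (a sum of $0/1$ Bernoulli trials), and then converts counts into cost via the telescoping inequality $\max_{e\in G^{(l)}}c^S_e\leq\min_{e\in G^{(l-1)}}c^S_e$ together with the LP budget bound~(\ref{b10}); the peculiar coefficient $\left(\rho_1\ln n+1.5\right)\bigl(1+2\sqrt{1+(\ln K+\ln f)/(\rho_1\ln n)}\,\bigr)$ falls out of that bucket structure. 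You instead normalize the per-scenario cost $W_S$ by $OPT_3$ and apply the Chernoff--Hoeffding bound for sums of independent $[0,1]$-valued (not merely binary) variables with mean at most $1$, then invert $(\mathrm{e}/a)^a\leq 1/(fKn^{\rho_1-1})$ for the prescribed $a$. This is cleaner and shorter --- it dispenses with the sorting and grouping entirely --- but it shifts the work into the numerical verification $a(\ln a-1)\geq\ln K+\ln f+(\rho_1-1)\ln n$, which you leave somewhat implicit; it does go through under $\rho_2+\rho_3\leq 3.92\rho_1$ and $\rho_1\geq 2$ (one checks $(u+1)/(1+2\sqrt{1+u})\leq\sqrt{1+u}/2\leq 1.11$ for $u=(\ln K+\ln f)/(\rho_1\ln n)\leq 3.92$, so $a\geq \mathrm{e}^{2.11}$ suffices), and it would be worth writing that computation out rather than asserting it. Your three preliminary LP facts ($\widehat{C}\leq OPT_3$, the rejection constraints, the budget constraint), your observation that the first-stage/second-stage split is invisible once a scenario is fixed, and the final union bound over the $K$ scenarios all match what the paper's argument needs and uses.
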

Let $\hat{F}^{S}_k$ be the forest for $S\in \Gamma$
after  the~$k$-th iteration of Algorithm~\ref{alg2stage},
Let $C^{S}_k$ denote the number of connected components 
 of~$\hat{F}^{S}_k$. 
  Again, 
we say that an iteration~$k$ is ``successful'' if 
either $C^{S}_{k-1}=1$  or
$C^{S}_{k}<0.9C^{S}_{k-1}$; 
otherwise it is ``failure''. 
The
probability of the event that 
iteration~$k$ is ``successful'' is at least $1/2$,
which is due to  Lemma~\ref{lalon}.

Consider any scenario~$S\in \Gamma$.
If forest $\hat{F}^{S}_k$ is not
connected then the number of ``successful''
iterations is less than $\log_{0.9} n < 10\ln n$.
We estimate $\mathrm{Pr}[\mathrm{X}<10\ln n]$
by $\mathrm{Pr}[\mathrm{Y}<10\ln n]$, where
 $\mathrm{X}$ is random variable denoting the number
of  ``successful''
iterations among $r$~iterations and
$\mathrm{Y}=\sum_{k=1}^{r}\mathrm{Y}_k$ is the sum of~$r$ independent
 Bernoulli trials
such that $\mathrm{Pr}[\mathrm{Y}_k=1]=1/2$,
$\mathbf{E}[\mathrm{Y}]=r/2$.
We use  the Chernoff bound 
and compute the values of~$\delta\in (0,1]$ and $r$
satisfying  the following inequality:
\begin{equation}
\mathrm{Pr}[\mathrm{X}<10\ln n]\leq 
\mathrm{Pr}[\mathrm{Y}<10\ln n]=
\mathrm{Pr}[\mathrm{Y}<(1-\delta)\mathbf{E}[\mathrm{Y}]]<
 \mathrm{e}^{-\mathbf{E}[\mathrm{Y}]\delta^2/2}=\frac{1}{nK}.
 \label{r2stageralg}
\end{equation}
This gives
$r= (\sqrt{\ln n +\ln K}+\sqrt{21\ln n +\ln K} )^2$
and $\delta=\frac{2\sqrt{\ln n +\ln K}}{\sqrt{\ln n +\ln K}+\sqrt{21\ln n +\ln K}}$.
Recall that $K$ is the number of scenarios. By the union bound,
the probability that a forest in at least one scenario~$S$ 
is not connected is less than $1/n$. Again,
by the union bound and Lemma~\ref{lminmax} (set $f=r$),
with probability at least $1-1/n$ 
in every $k$ iteration, $k=1,\ldots,r$,
the sets of edges~$\hat{E}_k$ and $\hat{E}_k^{S}$
for each $S\in \Gamma$,
included at iteration~$k$, satisfy the bound~(\ref{b2stage}).
 Thus,
after $r$ iterations, $r=\lceil (\sqrt{\ln n +\ln K}+\sqrt{21\ln n +\ln K} )^2\rceil$,
with probability at least $1-2/n$, we obtain  spanning
trees of cost 
$O(r\ln n)OPT_3$ in every scenario. 
We get the following theorem:
\begin{thm}
There is a polynomial time randomized algorithm for
 \textsc{2-stage  Minimum Spanning Tree} 
that returns with probability at least $1-\frac{2}{n}$  a 
spanning tree
whose  cost  in every scenario is
$O(\log^2 n)OPT_3$.
 \label{tra2s}
\end{thm}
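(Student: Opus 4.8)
The plan is to analyze Algorithm~\ref{alg2stage}, which first solves the relaxation $LP_{2stage}(C)$ by binary search over $C\in[0,(n-1)c_{\max}]$ --- this step is polynomial because the separation problem for the cut constraints reduces to min-cut --- obtaining the minimal feasible value $\widehat{C}$, which satisfies $\widehat{C}\le OPT_3$, together with a fractional point $(\hat{x}_e,\hat{x}^S_e)$. As in Section~\ref{randmax}, the crucial consequence of feasibility of $LP_{2stage}(\widehat{C})$ is that $\hat{x}_e>0$ implies $c_e\le\widehat{C}\le OPT_3$ and $\hat{x}^S_e>0$ implies $c^S_e\le\widehat{C}\le OPT_3$, so every edge that can ever be rounded in is ``cheap''. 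The algorithm then performs $r$ rounds of independent randomized rounding, accumulating the first-stage choices into a common forest and the second-stage choices into per-scenario forests $\hat{F}^S$; the output is the first-stage edge set $E_1=\bigcup_{k}\hat{E}_k$ together with the second-stage sets $E_2^S=\bigcup_k\hat{E}^S_k$, from which a spanning tree of each $\hat{F}^S$ is extracted. Since costs are nonnegative, the 2-stage cost of the returned solution in scenario $S$ is at most $\sum_{k=1}^r\bigl(\sum_{e\in\hat{E}_k}c_e+\sum_{e\in\hat{E}^S_k}c^S_e\bigr)$, so it suffices to bound the per-round contributions and the number of rounds.

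For \emph{feasibility} I would invoke Alon's Lemma~\ref{lalon}: conditioned on any history, each round is ``successful'' (either $\hat{F}^S_{k-1}$ is already connected, or the number of components drops below $0.9$ times the previous count) with probability at least $1/2$, for each fixed scenario $S$. A disconnected forest on $n$ vertices can have undergone fewer than $\log_{0.9}n<10\ln n$ successful rounds, so the probability that $\hat{F}^S$ is still disconnected after $r$ rounds is at most $\Pr[\mathrm{Y}<10\ln n]$ where $\mathrm{Y}\sim\mathrm{Bin}(r,1/2)$; this substitution is legitimate precisely because the $1/2$ lower bound holds irrespective of history. A Chernoff bound then forces this probability below $\tfrac{1}{nK}$ once $r\ge(\sqrt{\ln n+\ln K}+\sqrt{21\ln n+\ln K})^2$ (the value set in the algorithm), and a union bound over the $K$ scenarios gives that \emph{all} $\hat{F}^S$ are connected with probability at least $1-\tfrac1n$. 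With $K$ polynomially bounded in $n$, this choice of $r$ is $O(\log n)$.

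For the \emph{cost}, I would apply Lemma~\ref{l2stage} with $f=r$ and a constant $\rho_1\ge 2$ chosen so that $\rho_2+\rho_3\le 3.92\,\rho_1$, where $K\le n^{\rho_2}$ and $r\le n^{\rho_3}$: in each single round the combined first- plus second-stage cost incurred is at most $O(\log n)\,OPT_3$ in every scenario simultaneously, with probability at least $1-\tfrac{1}{r\,n^{\rho_1-1}}$. Taking a union bound over the $r$ rounds, with probability at least $1-\tfrac{1}{n^{\rho_1-1}}\ge 1-\tfrac1n$ every round obeys the bound, and summing over the at most $r=O(\log n)$ rounds gives total cost $O(r\log n)\,OPT_3=O(\log^2 n)\,OPT_3$ in every scenario (the returned spanning tree of $\hat{F}^S$ uses only a subset of the accumulated edges, and costs are nonnegative). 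A final union bound over the connectivity-failure event and the cost-failure event yields the claimed success probability $1-\tfrac2n$, and since the LP is solved in polynomial time and only $O(\log n)$ rounds are executed, the algorithm runs in polynomial time.

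The main obstacle is Lemma~\ref{l2stage} itself, whose proof (deferred to the appendix and parallel to that of Lemma~\ref{lminmax}) is where the real work lies: although each roundable edge costs at most $OPT_3$ and hence the \emph{expected} cost added per round per scenario is at most $\widehat{C}\le OPT_3$, one needs a concentration (Chernoff/Bernstein-type) estimate strong enough to survive a union bound over all $K$ scenarios, which is exactly why the bound carries the factor $(\rho_1\ln n+1.5)\bigl(1+2\sqrt{1+(\ln K+\ln f)/(\rho_1\ln n)}\bigr)$ --- the $\rho_1\ln n$ scaling must dominate $\ln K+\ln f$. The only other point requiring care is the interplay of the shared first-stage edges across scenarios, but since adding first-stage edges only helps connectivity in every scenario, Alon's per-scenario argument and the scenario union bound go through unchanged.
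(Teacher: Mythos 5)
Your proposal is correct and follows essentially the same route as the paper: solve $LP_{2stage}(C)$ by binary search, round independently for $r=\lceil(\sqrt{\ln n+\ln K}+\sqrt{21\ln n+\ln K})^2\rceil$ iterations, bound per-iteration cost via Lemma~\ref{l2stage} with $f=r$, establish connectivity of every $\hat{F}^S$ via Alon's lemma plus a Chernoff bound driving the per-scenario failure probability below $1/(nK)$, and combine with union bounds to get success probability $1-2/n$ and cost $O(r\ln n)OPT_3=O(\log^2 n)OPT_3$. Your closing remark about why the shared first-stage edges do not disturb the per-scenario connectivity argument is a point the paper leaves implicit, but it is handled exactly as you suggest.
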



\appendix
\section{Some proofs}
\label{dod}

\begin{proof}(Lemma~\ref{lminmax})
In order to prove the bound~(\ref{bminmax}),
we will apply a technique used in~\cite{KMU08,KZ09a}.
Consider any scenario $S\in \Gamma$.  Let us sort
the costs in $S$ in nonincreasing order $c^{S}_{e[1]}\geq
c^{S}_{e[2]}\geq\cdots\geq c^{S}_{e[m]}$, 
($m$ is the number of edges of $G$).
We partition the ordered set of edges~$E$ into groups  as follows.
The first group $G^{(1)}$ consists of edges $e[1],\ldots,e[j^{(1)}]$,
where $j^{(1)}$ is the maximum such that 
$\hat{x}_{e[1]}+\cdots+\hat{x}_{e[j^{(1)}]}\leq \rho_1\ln n$.
The subsequent groups $G^{(l)}$, $l=2,\ldots,t$,
are defined in the same way, that is
$G^{(l)}$ consists of edges $e[j^{(l-1)}+1],\ldots,e[j^{(l)}]$,
where $j^{(l)}$ is the maximum such that 
$ \hat{x}_{e[j^{(l-1)}+1]}+\cdots+\hat{x}_{e[j^{(l)}]}\leq \rho_1\ln n$.
The optimal value $OPT_1$ satisfies:
\begin{equation}
OPT_1\geq  \widehat{C}\geq
\sum_{i=1}^{m}c^{S}_{e[i]}\hat{x}_{e[i]}
\geq
\sum_{l=1}^{t}\left[(\min_{e\in G^{(l)}}c^{S}_e)\sum_{e\in G^{(l)}}\hat{x}_e\right]
\geq 
(\rho_1\ln n-1)\sum_{l=1}^{t-1}\min_{e\in G^{(l)}}c^{S}_e.
\label{b10}
\end{equation}
Let $\mathrm{X}_e$ be a binary random variable with
$\mathrm{Pr}[\mathrm{X}_e=1]=\hat{x}_e$. It holds 
\begin{eqnarray}
\sum_{e\in \hat{E}_k}c^{S}_e
&\leq& 
\sum_{l=1}^{t}\sum_{e\in G^{(l)}}c^{S}_e \mathrm{X}_e
\leq
\sum_{l=1}^{t}\sum_{e\in G^{(l)}} (\max_{e\in G^{(l)}} c^{S}_e) \mathrm{X}_e
\nonumber\\
&\leq&
(\max_{e\in G^{(1)}}c^{S}_e) \sum_{e\in G^{(1)}}\mathrm{X}_e+
 \sum_{l=2}^{t}\left[(\min_{e\in G^{(l-1)}}c^{S}_e)\sum_{e\in G^{(l)}} \mathrm{X}_e\right].
 \label{b2}
\end{eqnarray}
Let us recall a Chernoff bound (see e.g.,~\cite{MR95}).
Suppose $\mathrm{X}_1,\ldots,\mathrm{X}_N$ are independent
Poisson trials such that $\mathrm{Pr}[\mathrm{X}_i=1]=p_i$.
Let $\mathrm{X}=\sum_{i=1}^{N}\mathrm{X}_i$
Then the inequality holds:
$\mathrm{Pr}[\mathrm{X}>\mathbf{E}[\mathrm{X}](1+\delta)]<
     \mathrm{e}^{-\mathbf{E}[\mathrm{X}]\delta^2/4}$ for any 
     $\delta\leq 2\mathrm{e}-1$.
We  use this Chernoff bound to estimate 
$\sum_{e\in G^{(l)}} \mathrm{X}_e$ in each group $G^{(l)}$.
Consider a group $G^{(l)}$. It holds $\mathbf{E}[\sum_{e\in G^{(l)}} \mathrm{X}_e]=\sum_{e\in G^{(l)}} \hat{x}_e
\leq \rho_1\ln n$. Set 
$\delta=2\sqrt{(\rho_1\ln n+\ln K+\ln f)/(\rho_1\ln n)}$.
Since $K\leq n^{\rho_2}$, $1\leq f\leq n^{\rho_3}$ and
$\rho_2+\rho_3\leq 3.92\cdot \rho_1$, $\rho_1\geq 2$,
inequality $\delta\leq 2\mathrm{e}-1$ holds.
Thus the Chernoff
bound yields:
\begin{equation}
\label{prob}
\mathrm{Pr}\left[\sum_{e\in G^{(l)}} \mathrm{X}_e> 
\rho_1\ln n(1+\delta)\right]
<\mathrm{e}^{-(\rho_1\ln n+\ln K+\ln f)}=1/(fKn^{\rho_1}).
\end{equation}
By the union bound, the probability that $\sum_{e\in G^{(l)}} \mathrm{X}_e> 
\rho_1\ln n(1+\delta)$ holds for at least one group $G^{(l)}$ is less than $1/{(fKn^{\rho_1-1})}$ (because the number of groups is at most~$n$). Now applying the bound $\sum_{e\in G^{(l)}} \mathrm{X}_e\leq
\rho_1\ln n(1+\delta)$ for every $l=1,\dots,t$ to~(\ref{b2}) and 
using the fact that $\max_{e\in G^{(1)}}w^{S}_e\leq OPT_1$
and inequality~(\ref{b10}) we obtain:
$$
 \sum_{e\in \hat{E}_k}c^{S}_e\leq 
 \rho_1\ln n\left(1+2\sqrt{\frac{\rho_1\ln n+\ln K+\ln f}{\rho_1\ln n}}\right) \left(OPT_1 + \frac{OPT_1}{\rho_1\ln n -1}\right).
$$
An easy computation shows that:
		$\sum_{e\in \hat{E}_k}c^{S}_e\leq \left(\rho_1\ln n+1.5\right) 
 \left(1+ 2\sqrt{1+\frac{\ln K+\ln f}{\rho_1\ln n}}\right)OPT_1$.
The probability that the bound  fails for a given scenario $S$ is less than $1/(fKn^{\rho_1-1})$ so, by the union bound, the probability that it fails for at least one scenario $S\in \Gamma$ is less 
than~$1/(fn^{\rho_1-1})$.
\end{proof}

\begin{proof}(Lemma~\ref{lalon})
If $\hat{F}_{k-1}$ is connected then we are done. Otherwise,
let us denote by $H=(V_H,E_H)$ the graph obtained 
from $\hat{F}_{k-1}$ by contracting  its every connected components
to a single vertex.  An edge~$e$ is not included in $\hat{F}_{k}$
with probability~$1-\hat{x}_e$. Hence, the
probability that
any vertex~$v$ of $H$
remains   isolated is
\[
\prod_{e\in \delta(v)}(1-\hat{x}_e)\leq 
\mathrm{exp}(-\sum_{e\in \delta(v)}(1-\hat{x}_e))\leq 1/\mathrm{e},
\]
where $\delta(v)$ denotes the set of edges incident to~$v$.
The last inequality follows from the fact that
$\sum_{e\in \delta(v)}(1-\hat{x}_e)\geq 1$. By linearity of
expectation, the expected number of isolated vertices of $H$
is $|V_H|/\mathrm{e}$, and thus with the probability at 
least $1/2$ the number of isolated  vertices is at most 
$2|V_H|/\mathrm{e}$. Hence, the number of
connected components of $\hat{F}_k$ is at 
most
\[
\frac{2|V_H|}{\mathrm{e}}+\frac{1}{2}
\left(|V_H|- \frac{2|V_H|}{\mathrm{e}} \right)
=\left(\frac{1}{2}+\frac{1}{\mathrm{e}} \right)|V_H|<0.9|V_H|.
\]
Since $|V_H|=C_{k-1}$, the lemma follows.
\end{proof}

\end{document}